\newif\ifOneColumn 
\newif\ifLinksBlack 
\newtheorem{theorem}{Proposition}
\newcommand{\Tr}{\mathrm{tr}}
\newcommand{\diag}{\mathrm{diag}}
\newcommand{\I}{\mathbf{I}}
\newcommand{\IK}{\mathbf{I}_{K}}
\newcommand{\A}{\mathbf{A}}
\newcommand{\E}{\mathbb{E}}
\newcommand{\Q}{\mathbf{Q}}
\newcommand{\U}{\mathbf{U}}
\newcommand{\V}{\mathbf{V}}
\newcommand{\Z}{\mathbf{Z}}
\newcommand{\Wbf}{\mathbf{W}}
\newcommand{\Wbfh}{\mathbf{W}^{H}}
\newcommand{\Htilde}{\widetilde{\U}_{H}}
\newcommand{\Hbf}{\mathbf{H}}
\newcommand{\Heq}{\mathbf{H}_{\text{eq}}}
\newcommand{\Hbfh}{\mathbf{H}^{H}}
\newcommand{\Sbf}{\mathbf{\Sigma}}
\newcommand{\Wp}{\mathbf{W}_{\mathrm{P}}}
\newcommand{\Wph}{\mathbf{W}^{H}_{\mathrm{P}}}
\newcommand{\Wpi}[1]{\mathbf{W}_{\mathrm{P},#1}}
\newcommand{\Wphi}[1]{\mathbf{W}^{H}_{\mathrm{P},#1}}
\newcommand{\Wb}{\mathbf{W}_{\mathrm{B}}}
\newcommand{\Wbij}[2]{\mathbf{W}_{\mathrm{B},#1}^{(#2)}}
\newcommand{\Wbhij}[2]{\mathbf{W}_{\mathrm{B},#1}^{(#2)H}}
\newcommand{\h}{\mathbf{h}}
\newcommand{\y}{\mathbf{y}}
\newcommand{\x}{\mathbf{x}}
\newcommand{\z}{\mathbf{z}}
\newcommand{\s}{\mathbf{s}}
\newcommand{\hatu}{\hat{\mathbf{u}}}
\newcommand{\hatn}{\hat{\mathbf{n}}}
\newcommand*{\argmax}{arg\,max}
\newcommand{\Nprb}{N_{\text{PRB}}}
\newcommand{\fB}{f_{\text{B}}}
\newcommand{\Mp}{M_{\text{p}}}
\newcommand{\Np}{N_{\text{p}}}
\newcommand{\Nb}{N_{\text{b}}}
\newcommand{\Cf}{C_{\text{filt}}}
\newcommand{\Cform}{C_{\text{form}}}
\newcommand{\Nspu}{N_{\text{SPU}}}
\newcommand{\Rinter}{R_{\text{inter}}}
\newcommand{\Rintra}{R_{\text{intra}}}
\newcommand{\Req}{R_{\text{eq}}}
\newcommand{\betap}{\beta_{\text{p}}}
\newcommand{\betaba}{\beta_{\text{b1}}}
\newcommand{\betabb}{\beta_{\text{b2}}}
\newcommand{\betabc}{\beta_{\text{b3}}}
\newcommand{\betabi}{\beta_{\text{b}i}}
\newcommand{\Nba}{N_{\text{b1}}}
\newcommand{\Nbi}{N_{\text{b}i}}
\newcommand{\nP}{n_{\text{P}}}
\newcommand{\Ca}{C_{\mathrm{ub1}}}
\newcommand{\Cb}{C_{\mathrm{ub2}}}
\newcommand{\Nparal}{N_{\text{paral}}}
\newcommand{\Nproc}{N_{\text{proc}}}
\begin{document}
	
	\title{Distributed and Scalable Uplink Processing for LIS: Algorithm, Architecture, and Design Trade-offs}
	
	\author{
		Jes\'{u}s~Rodr\'{i}guez~S\'{a}nchez~\href{https://orcid.org/0000-0002-5531-1071}{\includegraphics[scale=0.04]{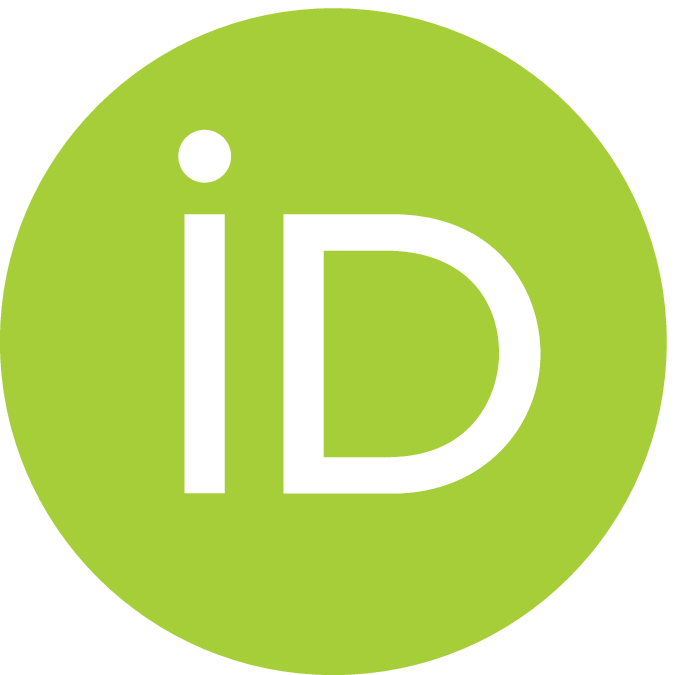}},~\IEEEmembership{Student~Member,~IEEE,}
		Fredrik~Rusek~\href{https://orcid.org/0000-0002-2077-3858}{\includegraphics[scale=0.04]{orcid.eps}},~\IEEEmembership{Member,~IEEE,}
		Ove~Edfors~\href{https://orcid.org/0000-0001-5966-8468}{\includegraphics[scale=0.04]{orcid.eps}},~\IEEEmembership{Senior~Member,~IEEE,}
		and~Liang~Liu~\href{https://orcid.org/0000-0001-9491-8821}{\includegraphics[scale=0.04]{orcid.eps}},~\IEEEmembership{Member,~IEEE}
		\thanks{This paper is build upon previous result presented at the 2020 IEEE ICC \cite{jesus_icc20}, and unpublished article with preliminary results \cite{jesus_iter}.
		The authors are with the Electrical and Information Technology, Lund University, 22363 Lund, Sweden (e-mail: \{jesus.rodriguez, fredrik.rusek, ove.edfors, liang.liu\}@eit.lth.se).}
	}
	
	\maketitle
	
	\begin{abstract}
		The Large Intelligent Surface (LIS) is a promising technology in the areas of wireless communication, remote sensing and positioning. It consists of a continuous radiating surface located in the proximity of the users, with the capability to communicate by transmission and reception (replacing base stations).
		Despite of its potential, there are numerous challenges from implementation point of view, being the interconnection data-rate, computational complexity, and storage the most relevant ones. In order to address those challenges, hierarchical architectures with distributed processing techniques are envisioned to to be relevant for this task, while ensuring scalability. In this work we perform algorithm-architecture codesign to propose two distributed interference cancellation algorithms, and a tree-based interconnection topology for uplink processing. We also analyze the performance, hardware requirements, and architecture trade-offs for a discrete LIS, in order to provide concrete case studies and guidelines for efficient implementation of LIS systems.
	\end{abstract}

	\section{Introduction}
	\label{section:intro}
	LIS has been identified as one of the key technologies for beyond 5G \cite{husha_data,husha_data2,husha_asign,husha_pos}. In Fig. \ref{fig:LIS_concept} we show the concept of a LIS serving multiple users simultaneously. The LIS is a continuous radiating surface located in the proximity of the users. Each part of the surface is capable of receiving and transmitting electromagnetic (EM) waves with a certain control, so the EM waves can be focused in 3D space with high resolution, opening the door of a new world of possibilities for power-efficient communication.
	
	Apart from LIS, another type of intelligent surface has been studied in the literature, which can be classified within the smart radio environment paradigm \cite{direnzo}, by which the wireless channel can be controlled to facilitate the transmission of information, as opposite to traditional wireless communication systems, where the channel is imposed by nature, and transmitter and receiver adapt to changes in it. One example of this new trend is the reconfigurable surfaces, known as \textit{intelligent reflecting surfaces}, \textit{programmable metasurfaces}, \textit{reconfigurable intelligent surfaces (RIS)}, and \textit{passive intelligent mirrors} among others \footnote[1]{We refer to \cite{basar} and \cite{huang_hol} for a complete list of surfaces.}, which consist of electronically passive surfaces with the capability to control how the waves are reflected when hitting their surface. Furthermore, the term LIS has also been recently used for such a passive surfaces \cite{taha,han,huang}, with the subsequent risk of confusion. While RIS can be seen as part of the radio channel, LIS acts as an active  basestation/access point. LIS contains full transmitters and receivers chains, together with baseband processing capabilities to transmit and receive. A list of the main differences between RIS and LIS is shown in Section \ref{sub:RIS}.

	\begin{figure}
		\centering
		\includegraphics[width=\linewidth]{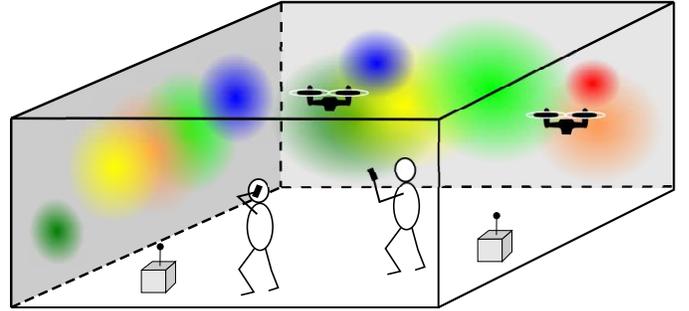}
		\caption{A LIS serving multiple users simultaneously.}
		\label{fig:LIS_concept}
		\vspace*{-4mm}
	\end{figure}

	Most of the research on LIS has been focusing on concept exploration  \cite{husha_data,husha_data2,husha_asign,husha_pos}, system performance \cite{jung,juan_CAMSAP}, and channel modeling \cite{dardari,williams}. However, the questions from implementation point of view have not yet been answered.  This paper aims to cover this area, by identifying and addressing implementation challenges, and providing design guidelines for an efficient implementation of LIS.
	
	The first step to make LIS implementable is to make it discrete (based on discrete antennas). It is known \cite{husha_data} that a continuous LIS can be replaced by a discrete one with no practical difference in achieved capacity. However, an efficient implementation of a discrete LIS is still very challenging, as it is expected to be made up of a very large number of antennas with the corresponding receiver (and transmitter) chains, which translates into a tremendous amount of inter-connection data-rate, that needs to be routed to the Central Digital Signal Processor (CDSP) through the backplane network. This centralized approach has already been employed in the LuMaMi Massive MIMO testbed \cite{LuMaMi}, with a need of $100$ bidireccional links, and a total aggregated interconnection bandwidth of 5GB/s. In case of LIS this number is much higher. To illustrate, let´s assume a $1.2m\times1.2m$ array containing $1,024$ antennas in the 4GHz band (assuming spacing of half wavelength), with the corresponding radio frequency (RF) and analog-to-digital converter (ADC) blocks. Then, if each ADC uses 12bits per I and Q, that makes a total rate of $\sim 48$Tb/s \footnote{Assuming 5G-NR standard, and sampling rate of $480,000 \cdot 4,096 \sim 2$Gs/s.}. This is $3$ orders of magnitude higher than the massive MIMO counterpart \cite{LuMaMi}, where this issue has been previously addressed \cite{cavallaro,puglielli,jesus_journal_MaMi,muris}. Therefore there is a need to come up with specific architectures and algorithms in order to overcome this bottleneck.
	
	We propose to tackle those challenges by algorithm and architecture co-design. At the algorithm level, we explore the unique features of LIS (e.g., very large aperture) to develop distributed algorithms that enable the processing being performed locally, near the antennas. This will significantly relax the requirement for interconnection bandwidth. At the hardware architecture design level, we propose to panelize the LIS in order to facilitate processing distribution, scalability, manufacturing, and installation. A hierarchical interconnection topology is developed accordingly to provide efficient and flexible data processing, and data exchange between panels and CDSP. Based on the proposed algorithm-architecture, extensive analysis has been performed to enable trade-offs between system capacity, interconnection bandwidth, computational complexity, and processing latency. This will provide high-level design guidelines for the real implementation of LIS systems. The contributions of this work are originated from our previous work in \cite{juan_VTC19,jesus_icc20,jesus_iter}, being considerably extended in the present paper.
	
	This article is organized as follows: Section \ref{section:LIS} introduces the LIS concept, then the system model is presented in Section \ref{section:sys_model}. Our proposed algorithms are described in Section \ref{section:algorithms}, and the architecture description in Section \ref{section:arch}. Analysis and design trade-offs are presented in Section \ref{section:analysis}, and finally conclusions in Section \ref{section:conclusions}.

	Notation: In this paper, lowercase, bold lowercase and upper bold face
	letters stand for scalar, column vector and matrix, respectively. The
	operations $(.)^T$, $(.)^*$ and $(.)^H$ denote transpose, conjugate and conjugate transpose respectively. $\IK$ represents the identity matrix of size $K \times K$. Operator $\diag(.)$ returns a block diagonal matrix built with the list of matrices in the argument.

	\section{Large Intelligent Surfaces}
	\label{section:LIS}

	\begin{figure*}[ht]
		\captionsetup[subfigure]{justification=centering}
		\centering
		\subfloat[LIS panel with 64 dual-port antennas]{
			\includegraphics[width=0.24\linewidth]{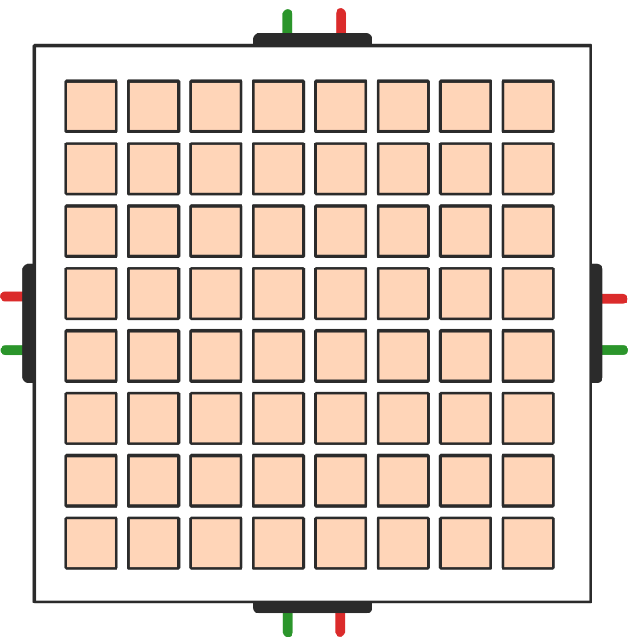}
			\label{fig:LIS_processing_dist_a}
		}
		\subfloat[Internal LIS panel architecture]{
			\includegraphics[width=0.24\linewidth]{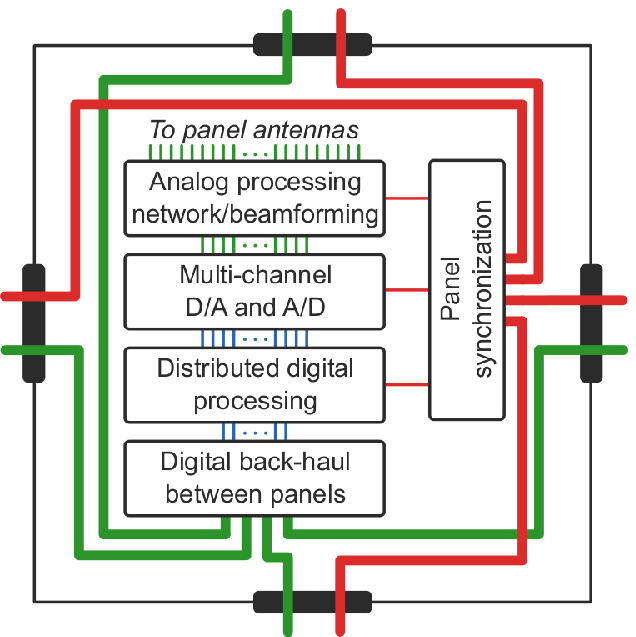}
			\label{fig:LIS_processing_dist_b}
		}
		\subfloat[Fully connected LIS using 16 panels]{
			\includegraphics[width=0.23\linewidth]{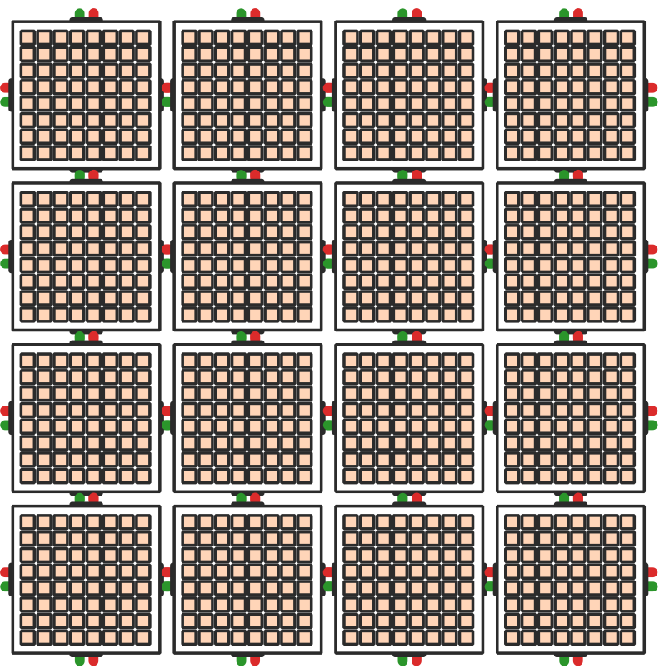}
			\label{fig:LIS_processing_dist_c}
		}
		\subfloat[Partially connected LIS using 6 panels]{
			\includegraphics[width=0.25\linewidth]{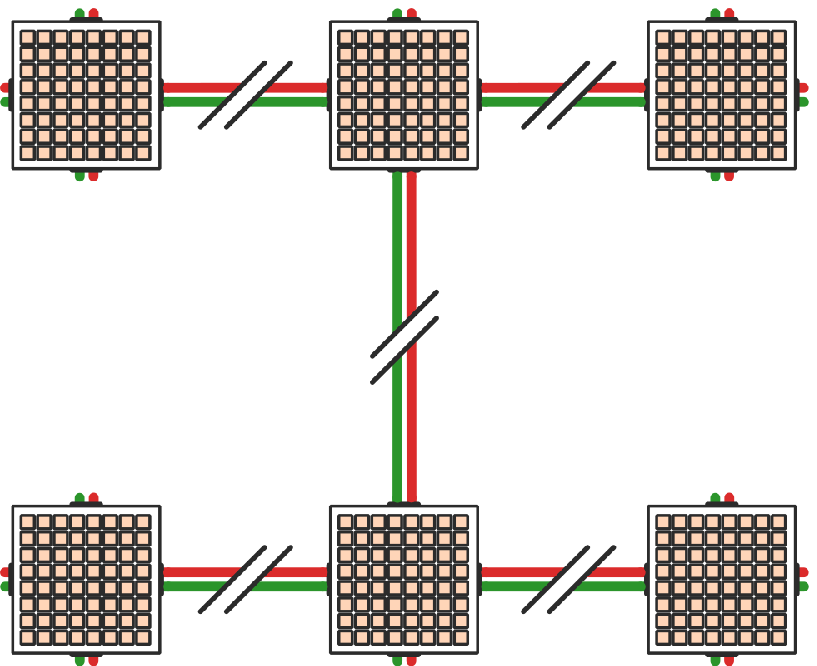}
			\label{fig:LIS_processing_dist_d}
		}
		
		\caption{LIS architecture components in the form of a) panel, b) each with internal analog and digital processing resources, synchronization, and digital back-haul. Identical panels can be combined in arbitrary configurations, e.g., fully or partially connected. Each panel contributes with its own processing resources, making the available resources for distributed processing fixed per area unit.}
		\label{fig:LIS_processing_dist}
		\vspace*{-5mm}
	\end{figure*}

	This section describes the key features of LIS, by comparing with massive MIMO and RIS. We also present the general concept of panelized LIS, which is proposed to ensure scalability and implementation feasibility.
	
	\subsection{Differences with Massive MIMO}
	\label{subsection:difference_mami}
	Multi-antenna technology has evolved in recent years in the form of Massive MIMO, where the number of antennas in the Base-Station (BS) grows up to $\sim 100$, bringing many benefits from communication and energy consumption points of view \cite{rusek}. LIS wants to go further by increasing the number of antennas one or two orders of magnitude more, altogether with the physical size of the array, which brings gains beyond Massive MIMO can provide. This results in fundamental differences between these two technologies, which are listed as follows:

	\begin{itemize}
		\item LIS aperture is larger in comparison to Massive MIMO, which translates into higher directivity and spatial multiplexing capabilities.
		\item Users are close to the LIS in relation to its size, what makes them being in the near field, as opposite to Massive MIMO (and others cellular access technologies) where users are in the (Fraunhofer) far field region. Being in the near field requires the use of channel models based on spherical waveforms, rather than the planar wave approximation, whose use is generalized in Massive MIMO (and other cellular technologies).
		\item Due to the lower path loss (due to the close proximity between users and LIS), and the large antenna gain, transmit power is expected to be relatively small for both sides of the communication, opening the door for extensive use of low-cost and low-power analog components.
		\item Received power distribution from users is not uniform throughout the surface as illustrated in Fig. \ref{fig:LIS_concept}. The same user is received with different signal intensity from different parts of the LIS. This can be exploited by the use of localized digital signal processing, leading to a more efficient use of computational resources, and inter-connection bandwidth, without significantly sacrificing the system performance. This is in contrast with  Massive MIMO (and other cellular technologies), where users are seen with same power across the antenna array (which is in fact connected to the planar wave approximation).
	\end{itemize}

	\subsection{Differences with RIS}
	\label{sub:RIS}
	As commented in the Introduction, LIS and RIS are fundamentally different technologies. In this section we summarize the main differences between these two:
	\begin{itemize}
		\item RIS acts as a programmable reflector between the radio access point and the users, forming part of the channel. Typically it is configured in a way to improve a certain quality metric, such as capacity. LIS acts as a radio access point capable to communicate directly to users.
		\item LIS contains full receivers (in contrast to most of RIS) and baseband processing capabilities to obtain CSI from pilots transmitted by users. This allows an accurate calculation of the corresponding equalization matrix, and further detection within LIS.
	\end{itemize}

	\subsection{Panelized Implementation of LIS}	
	Given that LIS is large in physical size and there is a need for distributed processing close to the antennas, we propose to divide the LIS in square units or panels. Panelization allows the LIS to adapt to a wide range of scenarios by adding, moving, or removing panels as desired, varying consequently the size and form of the LIS. Different shapes can be achieved by placing the panels in different ways: square, rectangular or distributed (panels not physically together, but covering a certain area). It also simplifies the system design, verification, and fabrication by only focusing on the panel as building block, instead of covering all possible LIS sizes and forms. Additionally, the installation becomes also simpler as the panel weights less, being easy to lift and mount.
	
	A high level overview of the LIS architecture components, processing distribution, and interconnection is shown in Fig. \ref{fig:LIS_processing_dist}. Panels are composed of a group of antennas forming a squared array as shown in Fig. \ref{fig:LIS_processing_dist_a}. Each panel contains internal processing resources in the analog and digital domains, and inter-connection capabilities to connect the panel to other panels (Fig. \ref{fig:LIS_processing_dist_b}). As said before, panels provide freedom to assembly the LIS. As an example, Fig. \ref{fig:LIS_processing_dist_c} shows 16 panels fully connected, forming a 1024-antennas LIS, while in Fig. \ref{fig:LIS_processing_dist_d}, 6 physically distant panels are connected in a distributed fashion (e.g: covering a certain volume in space, such as an office, or theater).

	\section{System Model}
	\label{section:sys_model}

	\begin{figure}[ht]
		\footnotesize
		\centering
		\psfrag{xh}{$\s$}
		\psfrag{N}{$N$}
		\psfrag{K}{$K$}
		\psfrag{P}{$P$}
		\psfrag{MP}{$M_{p}$}
		\psfrag{NP}{$N_{p}$}
		\psfrag{z}{$\z$}
		\psfrag{y}{$\y$}
		\psfrag{x}{$\x$}
		\psfrag{W1}{$\Wbf_{\mathrm{P},1}$}
		\psfrag{W2}{$\Wbf_{\mathrm{P},2}$}
		\psfrag{WP}{$\Wbf_{\mathrm{P},P}$}
		\psfrag{WB}{$\Wbf_{\mathrm{B}}$}
		\psfrag{H1}{$\Hbf_{1}$}
		\psfrag{H2}{$\Hbf_{2}$}
		\psfrag{HP}{$\Hbf_{P}$}
		\psfrag{Z1}{$\Z_{1}$}
		\psfrag{Z2}{$\Z_{2}$}
		\psfrag{ZP1}{$\Z_{P-1}$}
		\psfrag{ZP}{$\Z_{P}$}
		\psfrag{TP1}{$\text{To Panel 1 (optional)}$}
		\psfrag{FE}{$\text{Front-End}$}
		\psfrag{BP}{$\text{Backplane}$}
		\psfrag{local}{$\text{local}$}
		\includegraphics[width=0.95\linewidth]{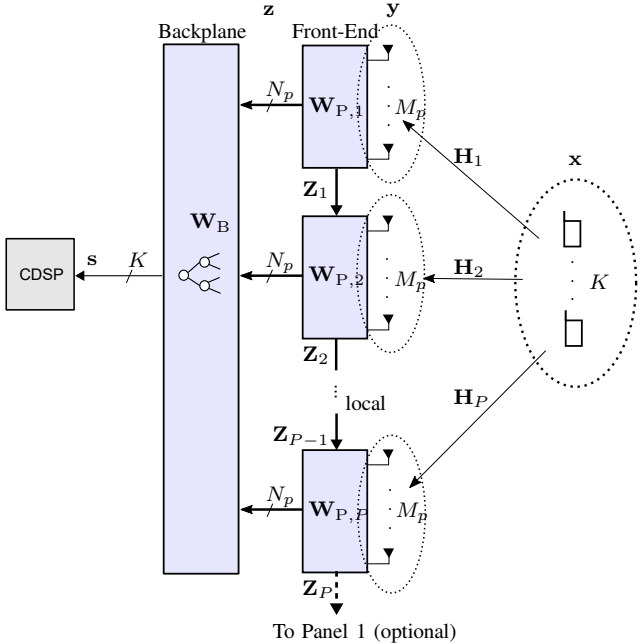}
		\caption{$K$ users transmitting to an M-elements discrete-LIS formed by $P$ panels.}
		\label{fig:system_model}
	\end{figure}
	
	A conceptual view of a discrete LIS system is presented in Fig. \ref{fig:system_model}. We consider $K$ users transmitting to the LIS, which is divided in three parts: \textit{front-end}, \textit{backplane}, and CDSP. We will use the term \textit{front-end} to refer to the per-antenna processing which is performed locally at each panel, and \textit{backplane} to the related processing involving data aggregation, distribution, and processing for further dimensionality reduction. Backplane can be made of multiple levels and processing nodes as we will present in Section \ref{section:arch}. The processing unit in the front-end is the Local DSP (LDSP), while the one in backplane is the Backplane DSP (BDSP). The data is finally collected by the CDSP for detection. In the present section we also introduce a mathematical model for the communication and the LIS-baseband processing.
	
	We consider the transmission from $K$ single antenna users to the  LIS containing $M$ active antenna elements (input dimensionality). The LIS is divided into $P$ squared panels, each with $M_{\text{p}}$ elements, such that $M_{\text{p}} \cdot P = M$. Each panel has an output with $N_p$ dimensions, and the total number of them is $N$, such that $N = N_{\text{p}} \cdot P$. Panels are connected to the backplane, which collects their output data, process it, and provides the CDSP with $K$ values to ensure proper detection. The data dimensionality is reduced from the antenna elements interface (vector $\y \in \mathbb{C}^{M}$ in the figure) to the backplane input ($\z \in \mathbb{C}^{N}$) due to the front-end, and from this to the CDSP interface ($\s \in \mathbb{C}^{K}$) due to backplane processing. We assume $M \gg K$ for the rest of the article.
	
	The $M\times 1$ received vector at the LIS is given by
	\begin{equation}
	\mathbf{y} = \sqrt{\rho}\Hbf\mathbf{x}+\mathbf{n},
	\label{eq:received_signal}
	\end{equation}
	where $\x$ is the transmitted $K\times 1$ user data vector, and $\E\{\x \x^{H}\}=\IK$. $\Hbf$ is the  channel matrix, and $\mathbf{n} \sim \mathcal{CN}(0,\I)$ is a $M \times 1$ noise vector, that we assume with identity covariance for simplicity without loss of generality. This convention leaves $\rho$ as the "transmit" SNR and therefore it is dimensionless.
	
	Assuming the location of user $k$ is $(x_{k},y_{k},z_{k})$, where the LIS is at $z=0$. The channel between this user and a LIS antenna at location $(x,y,0)$ is given by the complex value \cite{husha_data}
	\begin{equation}
	h_{k}(x,y)=\frac{\sqrt{z_{k}}}{2\sqrt{\pi} d_{k}^{3/2}}\exp{\left( -\frac{2\pi j d_{k}}{\lambda} \right)},
	\label{eq:channel}
	\end{equation}
	where $d_{k}=\sqrt{z_{k}^{2}+(x_{k}-x)^2+(y_{k}-y)}$ is the distance between the user and the antenna, and Line of Sight (LoS) propagation between them is assumed. $\lambda$ is the wavelength. The channel matrix can be expressed as
	\begin{equation}
	\Hbf=[\Hbf_{1}^{T},\Hbf_{2}^{T},\cdots \Hbf_{P}^{T}]^{T},
	\label{eq:H_structure}
	\end{equation}
	where $\Hbf_{i}$ is the $\Mp \times K$ channel matrix of the $i$-th panel.
	We assume each panel has perfect knowledge of its local channel.

	\subsection{Dimensionality reduction: A lossless or lossy process}
	\label{sub:lossles_vs_lossy}
	As commented previously, our LIS architecture can be seen as a system to reduce the dimensionality of the very large incoming signal ($M \times 1$) down to a value required for detection at the CDSP ($K \times 1$). We can classify this process attending to the criteria of preserving information as: lossless and lossy. A lossless process maintains the mutual information between CDSP input and user's data, formally
	\begin{equation}
	I(\s;\x) = I(\y;\x),
	\nonumber
	\end{equation}
	so the system can achieve channel capacity performance if optimal processing is done in CDSP. Initial progress on the trade-offs of distributed processing for MIMO systems in lossless approach can be seen in \cite{juan_icc20}, and more recently in \cite{juan_journal}. In this regime $\Np \geq \min\{\Mp,K\}$.
	
	In despite of the attractiveness of achieving optimal performance, lossless presents a high cost from implementation point of view, as it requires larger panel output dimensionality, which translates in higher interconnection bandwidth throughout the backplane. In this article we look for a good compromise between implementation cost and performance, which leads us to explore the case $\Np \leq \Mp$ \footnote{We note that this is equivalent to: $N \leq M$.}, and especially $\Np \ll \Mp$. By selecting this regime we expect to reduce significantly interconnection bandwidth at the cost of a loss in performance, which can be expressed formally as
	\begin{equation}
	I(\s;\x) \leq I(\y;\x).
	\nonumber
	\end{equation}

	Our approach is to include enough flexibility into the system to obtain enough working points to establish a rich trade between implementation cost and performance, which in fact, allows the system to adapt to a large variety of scenarios during the deployment phase. As we will see in Section \ref{section:analysis}, it is possible achieve close to channel capacity conditions with significant reduction in implementation cost.

	\subsubsection{Filtering}
	\label{sub:filtering}
	In order to achieve dimensionality reduction, we employ linear filtering in the incoming data, while to achieve enough flexibility we consider separate filters for front-end and backplane.
	
	Let us consider the panelized architecture shown in Fig. \ref{fig:system_model}, where each panel performs local per-antenna processing on the received signal and delivers the result to the backplane. There is not cooperation among panels during front-end filtering, therefore the filter matrix $\Wp$ has the following structure
	\begin{equation}
	\Wp = \diag(\Wpi{1},\Wpi{2},\cdots,\Wpi{P})
	\label{eq:W_structure}
	\end{equation}
	where $\Wpi{i}$ is the $\Mp \times \Np$ matrix filter of the $i$-th panel.
	
	Then the front-end output is given by
	\begin{equation}
	\z= \Wph \y = \sqrt{\rho}\Wph\Hbf \mathbf{x} + \hatn,
	\label{eq:filtering_fe}
	\end{equation}
	where $\hatn=\Wph\mathbf{n}$ is the filtered noise. Mind that size of $\z$ is $N$, and $N \leq M$ according to the reasoning in this section. Finally, the backplane filters $\z$ in order to obtain $\s$ as
	\begin{equation}
	\s= \Wbf_{\mathrm{B}}^{H} \z,
	\label{eq:filtering_bp}
	\end{equation}
	which is used by CDSP for detection.
	
	\subsection{Sum-Rate Capacity}
	\label{sub:sum-rate}
	The mutual information between $\z$ and $\x$ is $I(\x;\z)=H(\z)-H(\z|\x)$. Assuming white Gaussian signaling transmitted by users, the mutual information for a given $\Hbf$ and $\Wp$ can be further expanded as
	\begin{equation}
	\begin{split}
	I(\x;\z) &= \log_{2}|\Sbf_{\z\z}| - \log_{2}|\Sbf_{\hatn\hatn}|\\
	&= \log_{2} |\rho\Wph\Hbf\Hbfh\Wp+\Wph\Wp|\\
	&-\log_{2} |\Wph\Wp|,
	\end{split}
	\label{eq:MI}
	\end{equation}
	where $\Sbf_{\z\z}$ and $\Sbf_{\hatn\hatn}$ are the covariance of the multivariate complex gaussian vector $\z$ and $\hatn$ respectively.
	If $\Wp$ is full-rank matrix, and taking into account that $M \geq N$, then $(\Wph \Wp)^{-1}$ exists and we can rewrite \eqref{eq:MI} as
	\begin{equation}
	\begin{split}
	I(\x;\z) &= \log_{2} |\IK + \rho\Hbfh\Wp(\Wph\Wp)^{-1}\Wph\Hbf|.\\
	\end{split}
	\label{eq:MI2}
	\end{equation}

	We are interested in maximize the sum-rate capacity for this front-end architecture, and it will be the maximum of \eqref{eq:MI2} over all possible $\Wp$ for a given $\Hbf$. If we take into account the block structure of $\Hbf$ and $\Wp$ presented in \eqref{eq:H_structure} and \eqref{eq:W_structure} respectively, the sum-rate capacity at $\z$ interface is given by
	\begin{equation}
	\begin{split}
	C_{\z} &= \max_{\{\Wpi{i}\}} \log_{2} | \IK + \rho \sum_{i=1}^{P} \Hbf_{i}^{H} \Wpi{i} (\Wphi{i} \Wpi{i})^{-1} \Wphi{i} \Hbf_{i} |\\
	&= \max_{\{\Q_{i}:  \Q_{i}^{H}\Q_{i}=\I_{\Np}\}} \log_{2} | \IK + \rho \sum_{i=1}^{P} \Hbf_{i}^{H} \Q_{i} \Q^{H}_{i} \Hbf_{i} |,\\
	\end{split}
	\label{eq:Cz_multi}
	\end{equation}
	where $\Q_{i}$ is a $\Mp \times \Np$ semi-unitary matrix, consisting of the $\Np$-first singular vectors of $\Wpi{i}$.
	For the last expression in \eqref{eq:Cz_multi}, it is assumed that all matrices $\Wphi{i}\Wpi{i}$ are full-rank, so the inverse exists.
	
	As we will show in next section, selection of $\{\Wpi{i}\}$ is done in a way that each element is semi-unitary, which leads to white noise at the front-end output. Therefore, once the front-end filters are selected, they can be seen as part of the channel by the backplane, and we can apply same reasoning to obtain $\Wbf_{\mathrm{B}}$, leading to \footnote{A detailed explanation of this process can be found in Section \ref{section:arch}.}
	\begin{equation}
	C_{\s} = \max_{\Wb} \log_{2} | \IK + \rho \widetilde{\Hbf}^{H} \Wb (\Wb^{H} \Wb)^{-1} \Wb \widetilde{\Hbf} |,
	\label{eq:Cs}
	\end{equation}
	where $\widetilde{\Hbf}=\Wph \Hbf$ is the equivalent channel.

	\section{Distributed algorithms for dimensionality reduction}
	\label{section:algorithms}
	In this section we introduce two algorithms to obtain the filtering matrices $\{\Wpi{i}\}$ and $\Wb$, which are executed in the LDSP and BDSP respectively. The way the algorithms are explained here refers to the panels for simplicity, but can be extended to the backplane by using as channel matrix the equivalent one $\widetilde{\Hbf}$, presented before, and $P$ equal to the number of processing nodes in backplane. More details about the backplane case can be found in Section \ref{section:arch}.
	
	The first of the algorithms is a straightforward approach with relatively low computational complexity based on the known Matched Filter (MF) method, which we select conveniently as a comparison baseline for our proposed algorithm.
	
	\subsection{Reduced Matched Filter (RMF)}
	RMF consists of a reduced version of the known MF method. In this case, the filter  $\Wbf_{i}$ is built by the $N_{\text{p}}$ $\textit{strongest}$ columns of $\Hbf_{i}$. The $\textit{strenght}$ of a column $\h_{n}$ is defined as $\|\h_{n}\|^{2}$. The $\Mp \times \Np$ filtering matrix of the $i$-th panel is then expressed as
	\begin{equation}
	\Wbf_{\text{RMF},i} = \left[ \h_{k_1}, \h_{k_2}, ...,  \h_{k_{Np}} \right],
	\label{eq:W_RMF}
	\end{equation}
	where $\h_{n}$ is the $\Mp \times 1$ channel vector for the $n$-th user, and $\{k_{i}\}$ the set of indexes relative to the $N_{\text{p}}$ strongest users \footnote{This is connected to the non-uniform user power distribution in the LIS, described in Section \ref{subsection:difference_mami}, which translates to the fact that a panel may not see all users with same power, which depends on their physical proximity.}.
	
	When RMF is applied at the panel level as local filtering, each output is associated to a certain user. Therefore, nodes in the backplane can combine data coming from the same user, in a similar fashion as in distributed MF \cite{jeon_li2}. The result of the filtering is available at CDSP input for final detection (hard or soft). It is important to notice that in this method front-end processing nodes can work independently, without sharing channel related information. This saving in interconnection bandwidth comes with a performance loss as we will see in Section \ref{section:analysis}.
	
	\subsection{Iterative Interference Cancellation (IIC)}
	The IIC algorithm aims to solve the optimization problem described in \eqref{eq:Cz_multi}. It is an iterative algorithm based on a variant of the known multiuser water-filling method \cite{Yu}. The pseudocode is shown in Algorithm \ref{algo:MUWF}.
	\IncMargin{1em}
	\begin{algorithm}[ht]
		\SetKwInOut{Input}{Input}
		\SetKwInOut{Output}{Output}
		\SetKwInOut{Preprocessing}{Preprocessing}
		\SetKwInOut{Init}{Init}
		\Input{$\{\Hbf_{i}\}, i=1 \cdots P$}
		\Preprocessing{ $\Q_{i} = \mathbf{0}, i=1 \cdots P$}
		\Repeat {sum-rate converges} {
			\For{$i = 1,2,...,P$}{
				$\Z_{i} = \IK + \rho \sum_{j=1,j\neq i}^{P} \Hbf_{j}^{H} \Q_{j} \Q^{H}_{j} \Hbf_{j}$\\
				
				$\mathbf{\Q}_{i} = 
				\argmax_{\overline{\Q}_i}
				|\rho \Hbf_{i}^{H} \overline{\Q}_i \overline{\Q}_i^{H} \Hbf_{i} + \Z_{i}|$\\
				$\text{subject to } \overline{\Q}_i^H \overline{\Q}_i = \I_{\Np}$
				
			}
		}
		\Output{$\{\Q_{i}\}, i=1 \cdots P$}	
		\caption{IIC algorithm pseudocode}
		\label{algo:MUWF}
	\end{algorithm}\DecMargin{1em}
	The algorithm splits the joint optimization problem \eqref{eq:Cz_multi} into $P$ small ones, which are solved in a sequential basis.
	The goal of the algorithm is to calculate the $\Mp \times \Np$ matrices $\{\Q_{i}\}$. The product $\Q_{i}\Q^{H}_{i}$ is low-rank as $\Np \leq \Mp$, which exploits the fact that only a few users are conveniently seen by each panel (ideally this number is $\Np$). The fundamental difference between our current algorithm and \cite{Yu} is due to the low-rank constraint present in our proposed algorithm.
	
	At each iteration of the algorithm, $K \times K$ matrix $\Z_i$ is obtained as intermediate result, which contains contribution from the rest of panels, and plays the role of noise covariance in the sum-rate optimization problem formulated in line 4. The algorithm iterates over all panel indexes, as many times as needed until a certain convergence criteria is achieved.

	\subsection{Processing Distribution}
	\label{sub:dist}
	
	It is natural to map each iteration of the IIC algorithm to each panel, as it requires local CSI, while $\Z_{i}$ can be computed also locally as an update of $\Z_{i-1}$. Therefore, each panel computes and shares $\Z_{i}$ with the neighbor panel, $i+1$, while $\Q_{i}$ is stored locally for further filter calculation, and not shared.
	
	We propose that panels are connected by fast local and dedicated connections for the exchange of data related to matrix $\Z$. In general, we can say that "the matrix $\Z$ is passed from panel to panel" using the dedicated connections depicted in Fig. \ref{fig:system_model}. This decentralized approach is described in Algorithm \ref{algo:IIC_i} for a certain panel $i$ \footnote{For simplicity and to limit latency, we consider only one iteration to the set of panels throughout the rest of this article. We are aware that increasing the number of iterations improves the performance.}.

	\IncMargin{1em}
	\begin{algorithm}[ht]
		\SetKwInOut{Input}{Input}
		\SetKwInOut{Output}{Output}
		\SetKwInOut{Preprocessing}{Preprocessing}
		\SetKwInOut{Init}{Init}
		\Preprocessing{ $\mathbf{Z}_{0} = \IK$}
		\Input{$\{\Hbf_{i}, \Z_{i-1}\}$}
		$\Q_{i} = \argmax_{\overline{\Q}_i} |\rho \Hbf_{i}^{H} \overline{\Q}_{i} \overline{\Q}_{i}^{H} \Hbf_{i} + \mathbf{Z}_{i-1}|$\\
		$\text{subject to } \overline{\Q}_i^H \overline{\Q}_i = \I_{\Np}$\\
		$\mathbf{Z}_{i} = \mathbf{Z}_{i-1} + \rho\Hbf_{i}^{H} \Q_{i} \Q^{H}_{i} \Hbf_{i}$
		\caption{Decentralized IIC algorithm at $i$-th panel}
		\label{algo:IIC_i}
		\Output{$\{\Q_{i}, \mathbf{Z}_{i}\}$}
	\end{algorithm}\DecMargin{1em}

	The solution to the local optimization problem at $i$-th panel is $\Q_{i}=[\hatu_{1},\hatu_{2}, \cdots, \hatu_{\Np}]$, where $\hatu_{n}$ is the $n$-th left-singular vector of $\hat{\Hbf}_{i}=\Hbf_{i} \U_{z} \Sbf_{z}^{-1/2}$ corresponding to the n-$th$ ordered singular value, and $\mathbf{Z}_{i-1} = \U_{z} \Sbf_{z} \U_{z}^{H}$ the eigen-decomposition of $\Z_{i-1}$ (see Appendix-\ref{proof:sol_IIC_i} for proof).
	
	The pseudocode for the processing at the $i$-th panel is shown in Algorithm \ref{algo:IIC},
	\IncMargin{1em}
	\begin{algorithm}[ht]
		\SetKwInOut{Input}{Input}
		\SetKwInOut{Output}{Output}
		\SetKwInOut{Preprocessing}{Preprocessing}
		\SetKwInOut{Init}{Init}
		\Input{$\{\Hbf_{i}, \Z_{i-1}\}$}
		$[\U_z,\Sbf_z] = \text{svd} (\Z_{i-1})$\\
		$\widetilde{\Hbf}_{i}=\Hbf_{i} \U_{z} \Sbf_{z}^{-1/2}$\\
		$\widetilde{\U} = \text{svd} (\widetilde{\Hbf}_{i})$\\
		$\Q_{i} = \widetilde{\U}(:,1:\Np)$\\
		$\Z_i = \Z_{i-1} + \rho\Hbfh_i \Q_i \Q^{H}_i \Hbf_{i}$
		\caption{Decentralized IIC algorithm processing steps for $i$-th panel}
		\label{algo:IIC}
		\Output{$\{\Q_i, \Z_i\}$}
	\end{algorithm}\DecMargin{1em}
	where $\widetilde{\U}$ is the left unitary matrix of $\widetilde{\Hbf}$, and $\Q_i$ is made by the eigenvectors associated to the  $\Np$ strongest singular values.
	
	\subsection{Selection of $\Wbf$ in IIC Algorithm}
	\label{sub:selection_W}
	In the single panel case (centralized LIS), the optimal selection of $\Q$ leads to $\Q = \Htilde^{H}$, where $\Htilde$ is a $M \times N$ semi-unitary matrix made by the $N$-first left singular vectors of $\Hbf$. Then, capacity will be given by the first $N$ largest singular values of $\Hbf$. Once $\Q$ is known, in order to select $\Wbf$, we notice that $\Wbf = \Q \widetilde{\Sbf}_{W} \V_{W}^{H}$, where $\widetilde{\Sbf}_{W}$ is a diagonal $N \times N$ matrix containing the $N$ largest singular values of $\Wbf$.
	Selection of $\widetilde{\Sbf}_{W}$ and $\V_{W}$ does not play any role in the sum-rate capacity, but the right choice can provide some benefits in other areas. In this work we choose $\widetilde{\Sbf}_{W} = \I_{N}$ to make $\Wbf$ semiunitary matrix, which brings a benefit in terms of reduction of interconnection bandwidth, that will be explained in next section. Selection of $\V_{W}$ can be arbitrary, and for simplicity we choose $\V_{W}=\I_{N}$. However, other unitary matrices are also valid, and could offer some advantages, but we do not cover this in the present work.
	
	In the multiple panel case, \eqref{eq:Cz_multi} represents a joint optimization problem among the matrices in the set $\{\Q_{i}\}$.  Similarly to the single panel case, $\Wbf_{i} = \Q_{i} \widetilde{\Sbf}_{W,i} \V_{W,i}^{H}$. Therefore, once $\Q_{i}$ is obtained, the selection of $\widetilde{\Sbf}_{W,i}$ and  $\V_{W,i}^{H}$ will follow identical considerations, this is: $\widetilde{\Sbf}_{W,i} = \I_{\Np}$, and $\V_{W,i}^{H} = \I_{\Np}$.
	
	\section{Interconnection Topology and DSP architecture}
	\label{section:arch}

	In this section we describe the proposed LIS architecture, including interconnection topology, and LDSP internal architecture able to support both RMF and IIC algorithms.
  
    \subsection{Tree-based Global Interconnection and Processing}
	\label{sub:backplane}

    \begin{figure}[ht]
    	\footnotesize
    	\centering
    	\psfrag{W1}{$\Wpi{1}$}
    	\psfrag{W4}{$\Wpi{4}$}
    	\psfrag{W61}{$\Wpi{61}$}
    	\psfrag{W64}{$\Wpi{64}$}
    	\psfrag{W1_1}{$\Wbij{1}{1}$}
    	\psfrag{W4_1}{$\Wbij{4}{1}$}
    	\psfrag{W13_1}{$\Wbij{13}{1}$}
    	\psfrag{W16_1}{$\Wbij{16}{1}$}
    	\psfrag{W1_2}{$\Wbij{1}{2}$}
    	\psfrag{W4_2}{$\Wbij{4}{2}$}
    	\psfrag{W1_3}{$\Wbij{1}{3}$}
    	\psfrag{MP}{\tiny $\Mp$}
    	\psfrag{NP}{$\Np$}
    	\psfrag{NB1}{$\Nb^{(1)}$}
    	\psfrag{4NB1}{$4\Nb^{(1)}$}
    	\psfrag{NB2}{$\Nb^{(2)}$}
    	\psfrag{NB3}{$\Nb^{(3)}$}
    	\psfrag{CDSP}{\tiny $\text{CDSP}$}
    	\psfrag{BDSP}{$\text{BDSP}$}
    	\psfrag{Wh}{$\Wbfh$}
    	\psfrag{WhH}{\color{blue} $\Wbfh\Hbf$}
    	\psfrag{Why}{\color{red} $\Wbfh\y$}
    	\psfrag{H}{\color{blue} $\Hbf$}
    	\psfrag{W}{\color{blue} $\Wbf$}
    	\psfrag{IIC}{$\mathrm{IIC}$}
    	\psfrag{RMF}{$\mathrm{RMF}$}
    	\psfrag{y}{\color{red} $\y$}
    	\psfrag{BP}{$\text{Backplane}$}
    	\psfrag{FE}{$\text{Front-End}$}
    	\includegraphics[width=\linewidth]{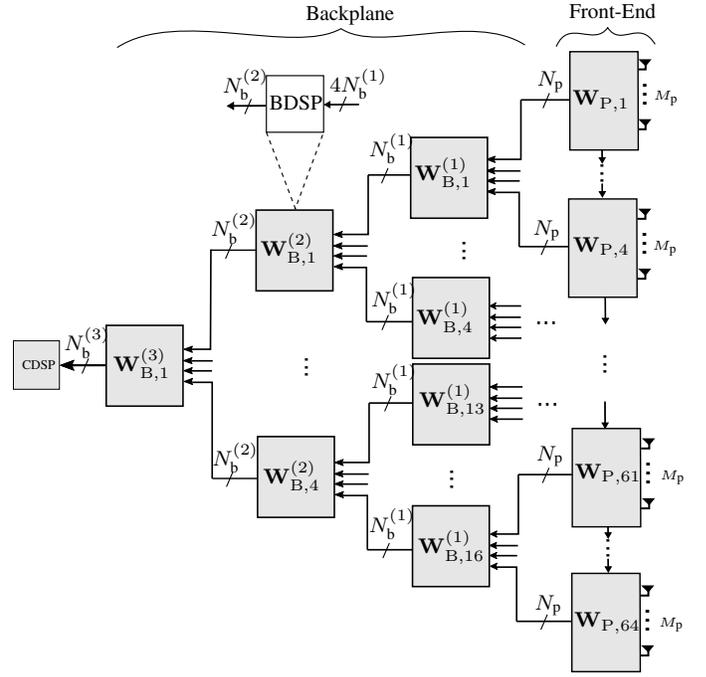}
    	\caption{Front-end and backplane tree topology and interconnection for a 64-panels LIS. Each panel contains a LDSP for distributed MIMO processing. Additionally, each node in the tree contains a BDSP unit, which aggregates data from 4 nodes, process, and delivers the result to the next node after corresponding dimensionality reduction, this is: $\Nb^{(i+1)} \leq 4 \Nb^{(j)},  i=1,2$, and $\Nb^{(1)} \leq 4 \Np$.}
    	\label{fig:model_backplane}
    	\vspace*{-5mm}
    \end{figure}

    In order to further increase the dimensionality reduction of the incoming data, while performing spatially local processing, we propose a hierarchical interconnection based on tree topology. The tree represents a distributed backplane, where front-end processing nodes are the leaves, and their outputs are combined in backplane nodes through multiple levels, reducing the total inter-connection bandwidth each time, until the resulting data is delivered to the CDSP. This process if shown in Fig. \ref{fig:model_backplane}. The main idea is to enable system scalability by adding levels in the tree as the LIS grows (more panels), while keeping the CDSP resources demand constant (dependent only on K) regardless of the LIS size. Another benefit of the tree topology is its low latency (the latency grows logarithmically with the number of panels).
    
   	As shown in the figure, the LIS backplane constitutes a 4-ary tree, which acts as an adaptation between the panels and the CDSP, introducing an extra dimensionality reduction of the incoming signal down to a level which can be efficiently transfered and handled by the CDSP, but high enough to allow good detection performance. Each node in the backplane contributes to $\Wb$, and aggregates data from 4 nodes, process it and deliver the output to the next node. The dimensionality of the output is lower or equal to the input, this is: $\Nb^{(i+1)} \leq 4 \Nb^{(j)},  i=1,2$, and $\Nb^{(1)} \leq 4 \Np$. This reduction is accumulated for the different consecutive levels of nodes the signal goes through.
	
	Let as assume the panels, during the formulation phase and after they obtain their local filtering matrix $\Wpi{i}$ (according to the selected algorithm), deliver the product $\Wpi{i}^{H} \Hbf_{i}$ ($\Np \times K$) to the corresponding node in the backplane. This can be seen as the result of filtering over the incoming pilot signals, which requires same amount of data as the filtering phase does. This product is the equivalent channel between the panel output and the users. A node aggregating outputs from 4 panels ($4 N_{p}$) can see those incoming values as an equivalent channel including the wireless channel and the 4 panels combined. The dimensionality of this equivalent channel is $4 N_{p}$, which is lower compared to the $4 M_{p}$ at the antenna level, but we expect it carries most of the captured channel capacity. If we take into account the selection of $\Wbf_{i}$ in the panels as semiunitary matrices according to Subsection \ref{sub:selection_W}, then the noise will be also white at the panel output. And filtered noise from 4 adjacent panels is still white due to the independence property of noise of different antennas/panels. Therefore at any node in the backplane connected to the panels we have same model as in \eqref{eq:received_signal} with the equivalent channel instead of $\Hbf_{i}$, and the filtered noise instead of $\mathbf{n}$, but with same covariance (identity matrix) \footnote{In case of not using semiunitary matrices, the noise gets colored and the covariance needs to be taken into account for sum-rate capacity optimization, therefore this noise covariance matrix needs also to be transfered between nodes in the tree. Selecting semiunitary matrices for the filters saves from this requirement.}. See Appendix-\ref{proof:white_noise} for proof. This means that \eqref{eq:filtering_fe} and \eqref{eq:filtering_bp}, and the sum-rate capacity derivation is also valid in this case with the equivalent channel, and the filter $\Wbf^{(1)}_{i}$ can be found by solving the optimization problem \eqref{eq:Cz_multi}. To obtain the filtering matrices, we follow the same problem described in Section \ref{sub:sum-rate}, with the same considerations as in \ref{sub:selection_W} for $\Wbf_{i}$ selection. Because we have the same problem for dimensionality reduction as in the front-end, both algorithms described in Section \ref{section:algorithms} can also be used in this case. This process can be repeated recursively for all levels of the tree up to the CDSP, which receives the total equivalent $K \times K$ channel matrix between the CDSP input interface and the users. This is used by the CDSP for detection. The general formulation algorithm to be executed at a certain LDSP or BDSP follows the steps shown in Algorithm \ref{algo:gen_form_alg}, where $\Heq$ is the equivalent channel matrix from current node input interface to users \footnote{Our experimental results shows no performance improvement by sharing $\Z$ among backplane nodes. Due to this reason we skip its use in Figure \ref{fig:model_backplane}}.

	\begin{algorithm}
		\DontPrintSemicolon
		\SetAlgoLined
		\SetKwInOut{Input}{Input}\SetKwInOut{Output}{Output}
		\Input{$\{\Heq,\Z\}$}
		
		\BlankLine

		\eIf{algorithm == IIC}{
			$\Wbf = \text{IIC}(\Heq, \Z)$\;
		}{
		$\Wbf = \text{RMF}(\Heq)$\;
		}
		\Output{$\{\Wbfh\Heq,\Z\}$}
		\caption{General formulation algorithm for tree-based LIS}
		\label{algo:gen_form_alg}
	\end{algorithm}
	\vspace*{-5mm}

	\subsection{DSP in panel and backplane nodes}

	\begin{figure*}[ht]
	 	\centering
	 	\psfrag{Mp}{$M_{\mathrm{p}}$}
	 	\psfrag{Np}{$N_{\mathrm{p}}$}
	 	\psfrag{ADC}{$\tiny\text{ADC}$}
	 	\psfrag{RF}{$\tiny\text{RF}$}
	 	\psfrag{ctrl}{$\text{ctrl}$}
	 	\psfrag{CDSP}{$\small\text{CDSP}$}
	 	\psfrag{LDSP}{$\text{LDSP}$}
	 	\psfrag{RCDSP}{$R_\text{CDSP}$}
	 	\psfrag{SPU}{$\text{SPU}$}
	 	\psfrag{Wh}{$\Wbfh_{i}$}
	 	\psfrag{WhH}{\color{blue} $\Wbfh_{i}\Hbf_{i}$}
	 	\psfrag{Why}{\color{red} $\Wbfh_{i}\y_{\mathrm{p}}$}
	 	\psfrag{H}{\color{blue} $\Hbf_{i}$}
	 	\psfrag{W}{\color{blue} $\Wbf_{i}$}
	 	\psfrag{FU}{$\text{FU}$}
	 	\psfrag{ALG}{$\tiny \text{(IIC/RMF)}$}
	 	\psfrag{MEM}{$\tiny \text{MEM}$}
	 	\psfrag{CE}{$\text{CE}$}
	 	\psfrag{FFT}{$\text{FFT}$}
	 	\psfrag{y}{\color{red} $\y_{\mathrm{p}}$}
	 	\psfrag{SPU}{$\text{SPU}$}
	 	\psfrag{PT}{$\text{Processing Tree}$}
	 	\psfrag{PP}{$P$}
	 	\psfrag{ppm1}{$\text{panel p-1}$}
	 	\psfrag{pp}{$\text{panel p}$}
	 	\psfrag{ppp1}{$\text{panel p+1}$}
	 	\psfrag{Zp}{$Z_\text{p}$}
	 	\psfrag{Zpm1}{$Z_\text{p-1}$}
	 	\includegraphics[width=0.9\linewidth]{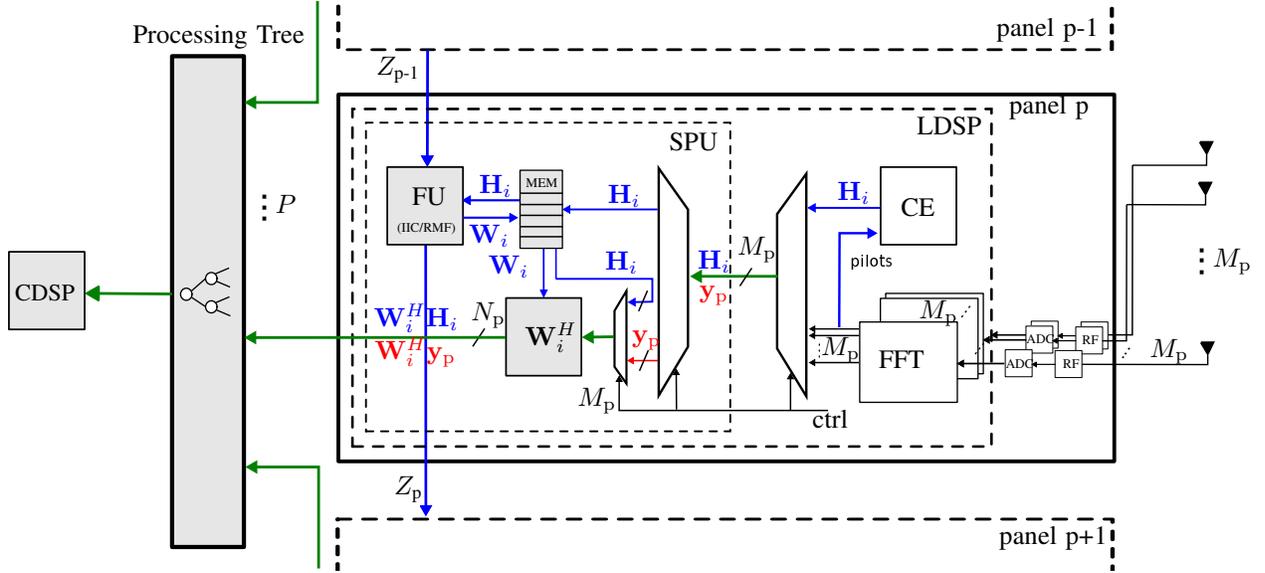}
	 	\caption{Overview of the Local DSP and Spatial Processing Unit (SPU) in a panel. Panel-panel, and panel-backplane connections are also shown. Blue lines are used only in formulation phase. Blue letters relate to data which is generated/transfered during formulation. Red ones refer to filtering phase. Green lines are used in both phases. In those cases, blue and red data structures are shown above and below the line. ctrl represents control line to switch between formulation and filtering phases.}
	 	\label{fig:panel}
	 	\vspace*{-3mm}
	\end{figure*}

	The internal architecture of the panel together with LDSP is depicted in Fig. \ref{fig:panel}. LDSP comprises all digital signal processing involved in the uplink tasks. After the RF and ADC, digitalized incoming signal is processed by FFT blocks to perform time-to-frequency domain transformation. During the formulation phase, the Channel Estimation block (CE) estimates a new $\Hbf_i$ for each channel coherence interval. In this paper we assume perfect channel estimation. The Spatial Processing Unit (SPU), and specifically the Formulation Unit (FU) block receives $\Hbf_i$ and computes the filtering matrix $\Wpi{i}$ (in the figure we drop the subscript P for convenience). FU performs complex conjugate transpose in the case of RMF, and follows steps in Algorithm \ref{algo:IIC} in the case of IIC. $\Wpi{i}$ is then written to the memory.
	During the filtering phase, incoming data vector gets multiplied by $\Wpi{i}$, and its dimensionality reduced from $\Mp \times 1$ to a $\Np \times 1$ ($\Np \ll \Mp$), which is sent to the backplane for further processing.
	
	The SPU is shown in the figure as part of the LDSP, but it is also present in the BDSP architecture. SPU is in charge of data collection, filtering, and distribution. It also performs matrix filtering calculation and its storing. In case of the BDSP architecture, SPU is its main processing element, as in this case FFT and channel estimation is not needed \footnote{Even tough the SPU as a processing unit is identical at each node, data dimensionality may differ from one level to another in the system tree}. The filter can be either $\Wbij{i}{j}$, or $\Wpi{i}$, depending on weather it is part of BDSP or LDSP respectively, and it supports both algorithms. The multiplexers allow to switch between filtering and formulation phase. It is important to notice that same input and output data ports are used during both phases. The dimensionality in both phases is the same. This design decision of using the same SPU architecture throughout the LIS is highly desirable, as it simplifies considerably the design time, verification, and cost of the system. Furthermore, by using the same unit, some or all the backplane nodes may potentially be mapped onto the panels, therefore reducing the number of physical units in the system (in the expense of increasing the workload in panels).
	
	\section{Performance Analysis and Design Trade-offs}
	\label{section:analysis}
	In this section, we analyze the performance and implementation cost of the proposed uplink detection algorithms with the corresponding implementation architecture. More in detail:
	\begin{itemize}
		\item Performance is analyzed based on sum-rate capacity.
		\item Implementation cost in terms of computational complexity, interconnection bandwidth, and processing latency.
	\end{itemize}
	
	The trade-offs between sum-rate capacity and implementation cost is then presented to give high-level design guidelines.

	\subsection{Performance: Optimality and capacity bounds}
	\label{sub:performance}
	
	Closed-form sum-rate expression for multi-panel LIS and IIC algorithm is out of the scope of this work, however we present two upper bounds which provide useful insights. Numerical evaluation of the bounds is shown in next subsection.
	\begin{theorem}
		\label{prop:ub}
		For a certain channel realization $\Hbf$, an upper bound for $C_{\z}$ is given by
		\begin{equation}
		C_{\z} \leq \min \{C_{\mathrm{ub1}}, \Cb\},
		\end{equation}
		where
		\begin{equation}
		\Ca = K \log_2 \left( 1 + \rho\frac{S_{\Np}}{K} \right),
		\end{equation}
		and
		\begin{equation}
		\Cb = \sum_{n=1}^{K}\log_2 (1 + \rho \lambda_{n}),
		\end{equation}
		where $S_{\Np} = \sum_{i=1}^{P}\sum_{n=1}^{\Np}\lambda_{n}^{(i)}$, $\lambda_{n}^{(i)}$ is the $n$-th eigenvalue of $\Hbf_{i}^{H}\Hbf_{i}$, and $\lambda_{n}$ is the $n$-th eigenvalue of $\Hbf^{H}\Hbf$. $P\Np \geq K$ is assumed.	
	\end{theorem}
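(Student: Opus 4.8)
The plan is to bound the two terms separately, in both cases starting from the characterization of $C_{\z}$ in \eqref{eq:Cz_multi}, namely $C_{\z} = \max_{\{\Q_i\}} \log_2|\IK + \rho\sum_{i=1}^P \Hbf_i^H\Q_i\Q_i^H\Hbf_i|$ over all semi-unitary $\Q_i$ with $\Q_i^H\Q_i = \I_{\Np}$, and then showing each bound holds for every admissible $\{\Q_i\}$ (hence for the maximizer).

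For $\Cb$: each $\Q_i\Q_i^H$ is an orthogonal projector of rank $\Np$ on $\mathbb{C}^{\Mp}$, hence $\Q_i\Q_i^H \preceq \I_{\Mp}$ in the positive semidefinite order, so $\sum_i \Hbf_i^H\Q_i\Q_i^H\Hbf_i \preceq \sum_i\Hbf_i^H\Hbf_i = \Hbf^H\Hbf$, where the last equality is the block decomposition \eqref{eq:H_structure}. Since the map $\A\mapsto\log_2|\IK+\rho\A|$ is monotone nondecreasing on the positive semidefinite cone, this gives $\log_2|\IK + \rho\sum_i \Hbf_i^H\Q_i\Q_i^H\Hbf_i|\le\log_2|\IK+\rho\Hbf^H\Hbf| = \sum_{n=1}^K\log_2(1+\rho\lambda_n)=\Cb$.

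For $\Ca$: write $\mathbf{M} = \IK + \rho\sum_i\Hbf_i^H\Q_i\Q_i^H\Hbf_i$, a positive definite $K\times K$ matrix with eigenvalues $\mu_1,\dots,\mu_K>0$. By the AM-GM inequality (equivalently, concavity of $\log$), $\log_2|\mathbf{M}| = \sum_k\log_2\mu_k \le K\log_2\big(\tfrac1K\sum_k\mu_k\big) = K\log_2\big(\tfrac1K\Tr\mathbf{M}\big)$. Now $\Tr\mathbf{M} = K + \rho\sum_i\Tr(\Q_i^H\Hbf_i\Hbf_i^H\Q_i)$, and for each panel the Ky Fan maximum principle bounds $\Tr(\Q_i^H\Hbf_i\Hbf_i^H\Q_i)$, over all $\Mp\times\Np$ matrices with $\Q_i^H\Q_i=\I_{\Np}$, by the sum of the $\Np$ largest eigenvalues of $\Hbf_i\Hbf_i^H$; since the nonzero eigenvalues of $\Hbf_i\Hbf_i^H$ and $\Hbf_i^H\Hbf_i$ coincide (padding with zeros when $\Np>\rank\Hbf_i$), this sum equals $\sum_{n=1}^{\Np}\lambda_n^{(i)}$. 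Summing over $i$ yields $\Tr\mathbf{M}\le K+\rho S_{\Np}$, hence $\log_2|\mathbf{M}|\le K\log_2(1+\rho S_{\Np}/K)=\Ca$.

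Combining, $C_{\z}\le\min\{\Ca,\Cb\}$. The only delicate points are invoking Ky Fan's principle in the rectangular semi-unitary form, which reduces to the square unitary case by completing $\Q_i$ to a full unitary and discarding the complementary block, and applying the concavity step to the genuinely positive definite $\mathbf{M}$; neither is a real obstacle, and the hypothesis $P\Np\ge K$ is used only to fix the regime of interest rather than in the inequalities themselves.
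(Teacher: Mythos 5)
Your proof is correct and follows essentially the same route as the paper: $\Cb$ is obtained by dropping the projections (the paper phrases this as the single-panel case dominating the multi-panel one), and $\Ca$ comes from the equal-eigenvalue/AM--GM step combined with maximizing the trace via the top $\Np$ eigenvalues per panel, which is exactly the paper's appendix argument with the Ky Fan step made explicit. Your write-up is in fact the more rigorous rendering of the paper's slightly informal "maximum capacity is achieved if all eigenvalues of $\A$ are equal" claim, so nothing further is needed.
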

	\begin{proof}
		$\Cb$ corresponds to the single panel case, then it acts as an upper bound, as always outperforms the multiple-panel case under the same conditions of $P$ and $\Np$. See Appendix-\ref{proof:ub} for proof of $\Ca$.
	\end{proof}

	\subsection{Performance: Experimental results and simulation}

	\begin{figure}
		\footnotesize
		\centering
		\psfrag{LIS}{LIS}
		\psfrag{1.2m}{$1.2m$}
		\psfrag{3m}{$3m$}
		\psfrag{10m}{$10m$}
		\includegraphics[width=\linewidth]{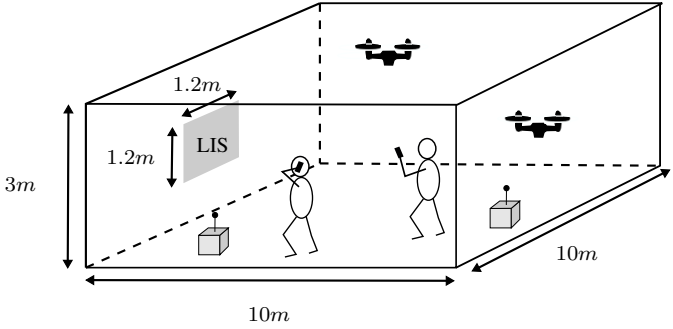}
		\caption{Simulation scenario. A $10m \times 10m \times 3m$ volume, with $1.2m \times 1.2m$ LIS. 64 users uniformly distributed in 3D.}
		\label{fig:LIS_scenario}
		\vspace*{-4mm}
	\end{figure}
	\begin{figure*}[htb]
		\centering
		\subfloat[Low SNR]{
			\includegraphics[width=0.43\linewidth]{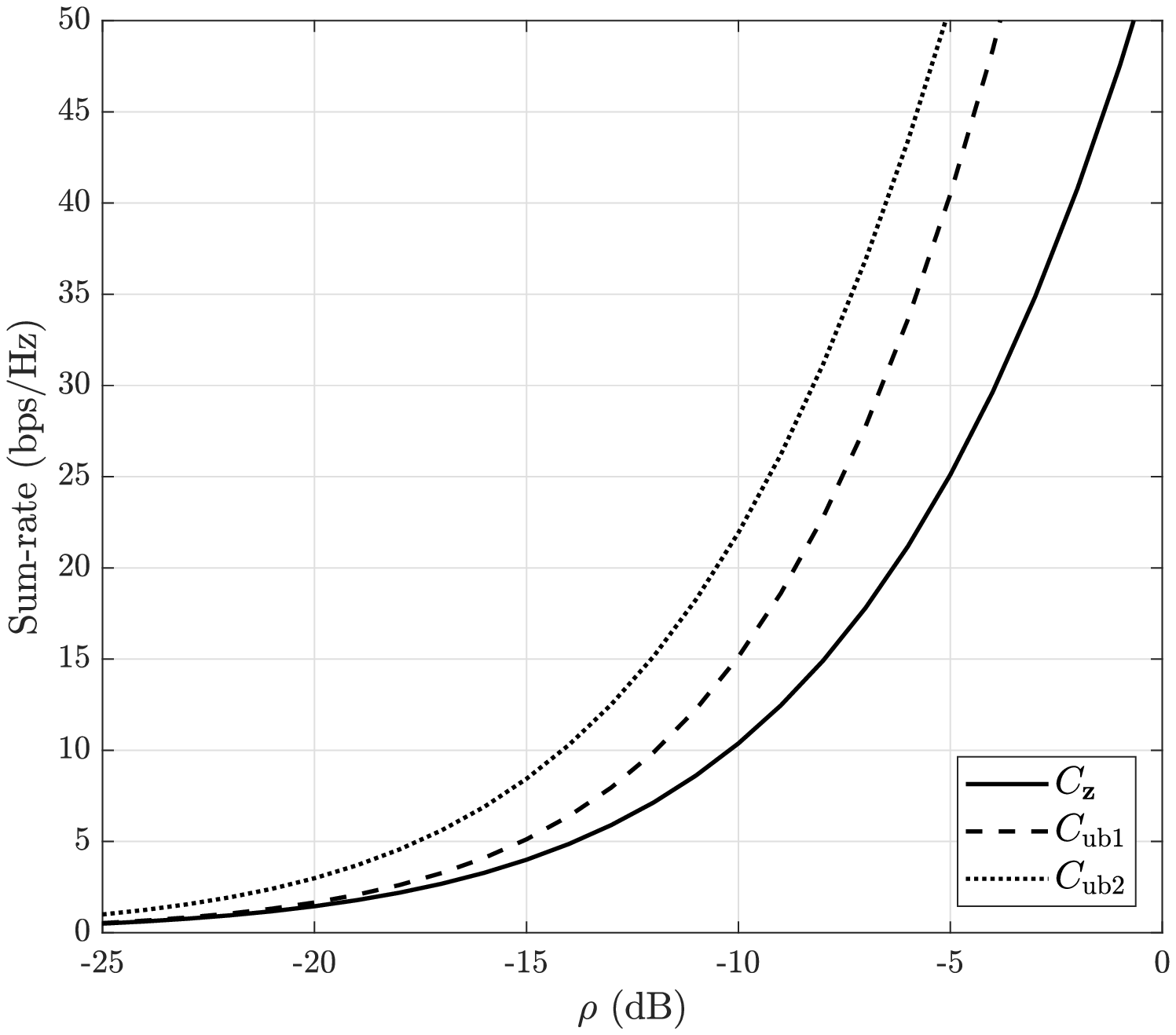}
		}
		\subfloat[High SNR]{
			\includegraphics[width=0.43\linewidth]{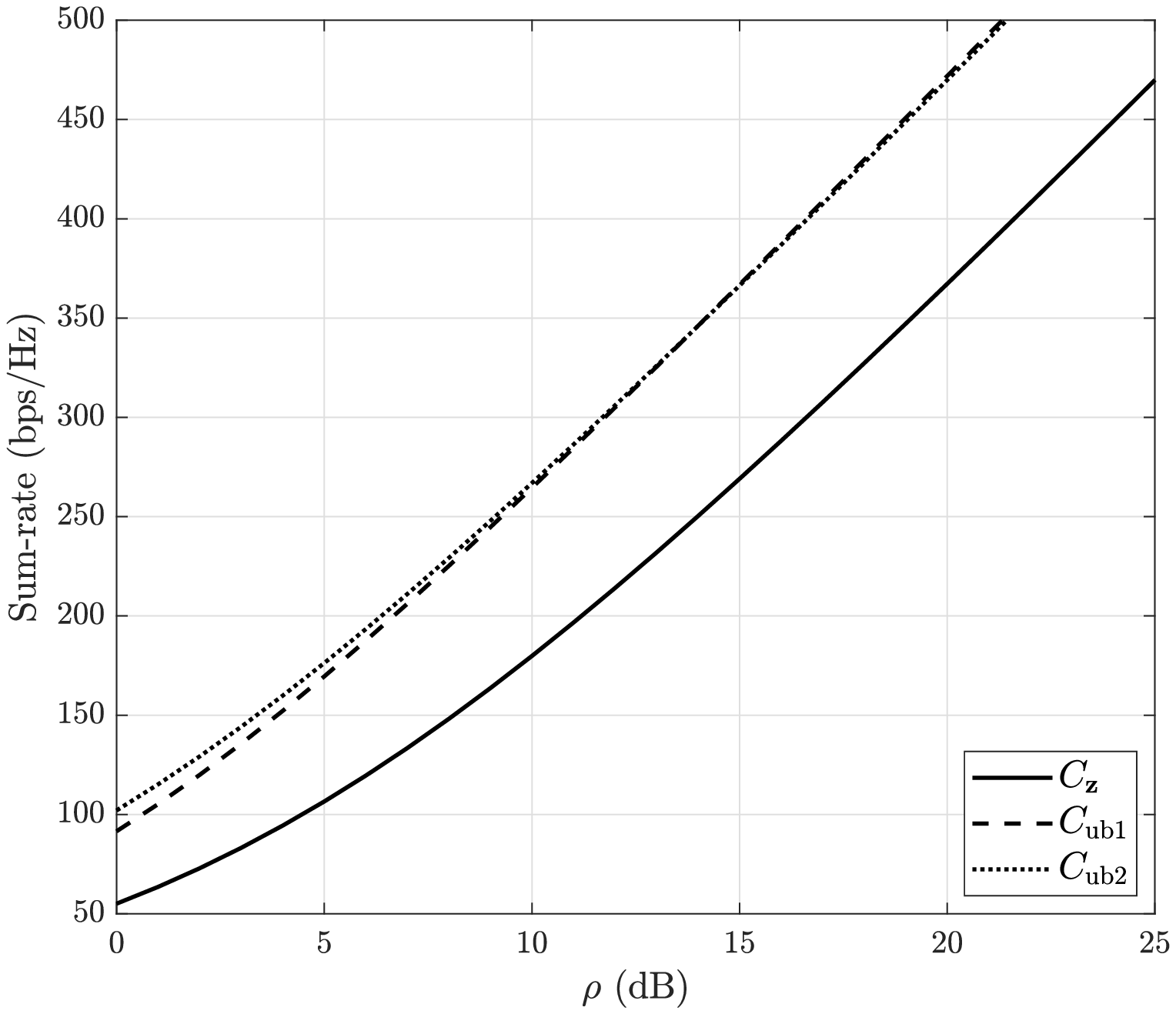}
		}	
		\caption{Average sum-rate capacity at panels output interface vs SNR. Upper bounds in  Proposition \ref{prop:ub} also shown in low and high SNR regimes. $M=1024$, $\Mp=16$, $\Np=2$, and $K=64$.}
		\label{fig:sum_rate_vs_SNR}
		\vspace*{-8mm}
	\end{figure*}
	\begin{figure*}[htb]
		\centering
		\subfloat[RMF]{
			\includegraphics[width=0.43\linewidth]{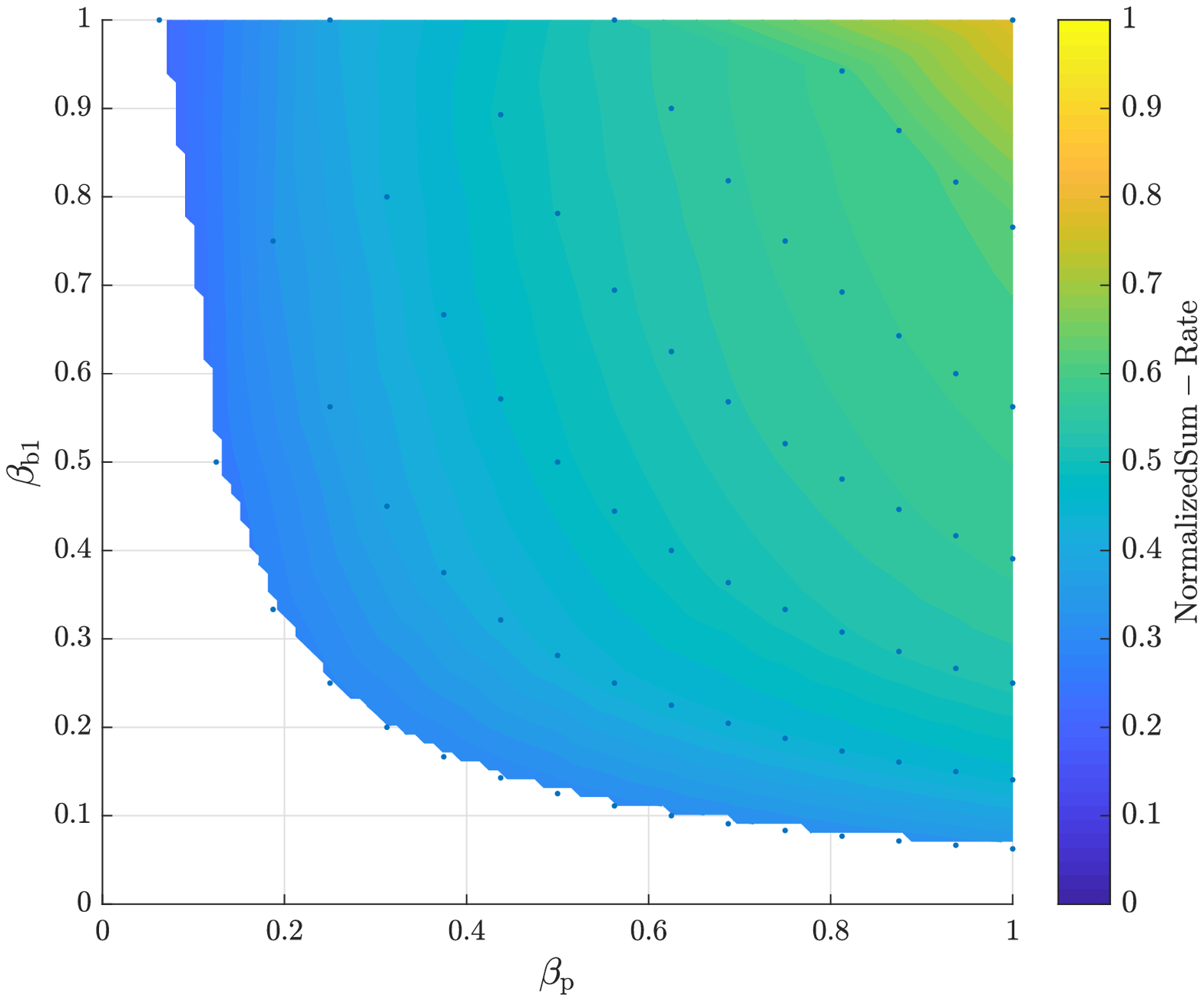}
		}
		\subfloat[IIC]{
			\includegraphics[width=0.43\linewidth]{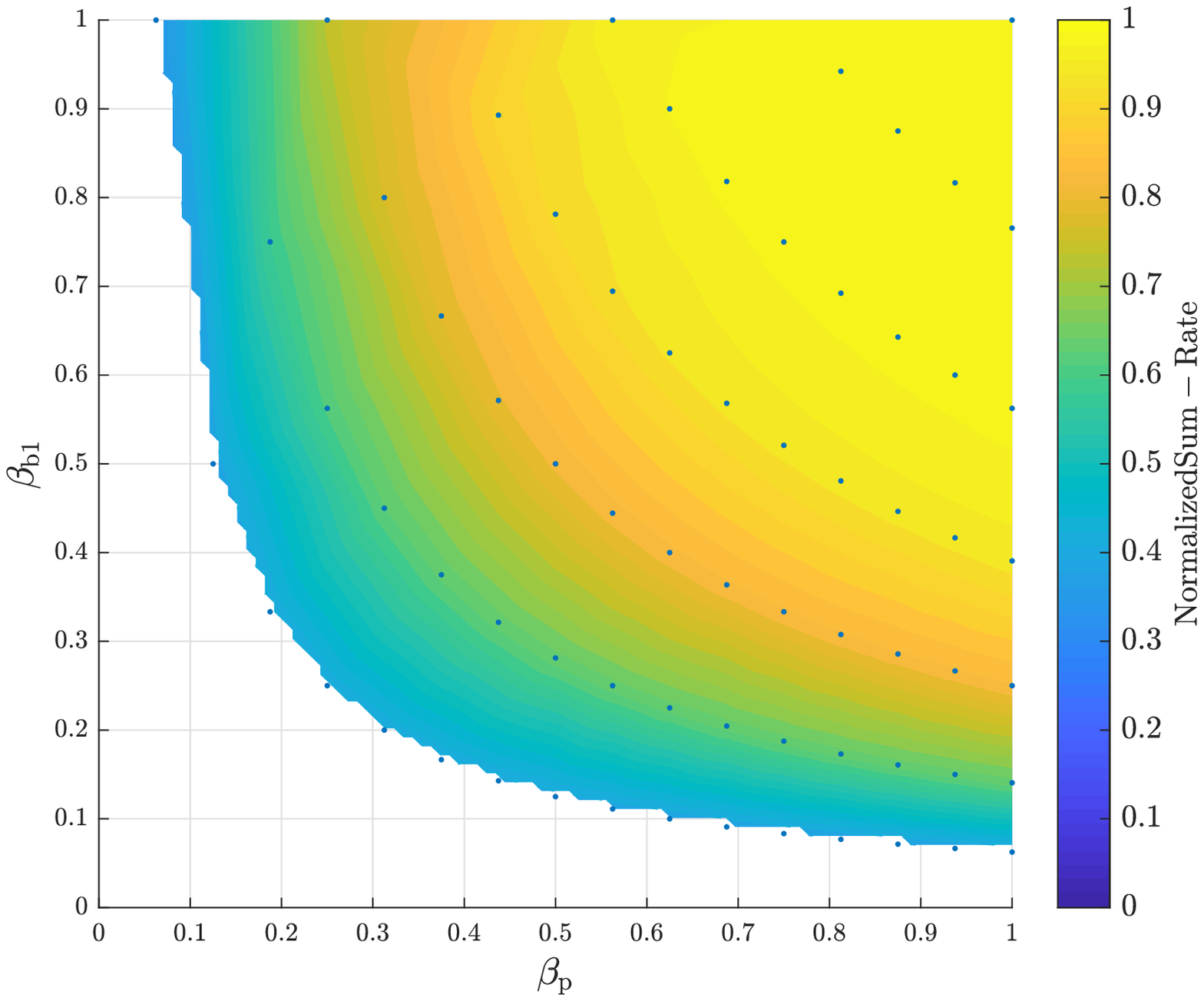}
		}	
		\caption{Sum-rate capacity normalized by channel capacity at CDSP interface for different values of $\beta_{b1}$ vs $\beta_{p}$. $\beta_{b1}=\beta_{b2}=\beta_{b3}$. $M=1024, \Mp=16, K=64$, $\rho=10$. Black dots represent simulated cases. Rest is obtained by linear interpolation.}
		\label{fig:sum_rate_betas}
		\vspace*{-4mm}
	\end{figure*}
	\begin{figure*}[htb]
		\centering
		\subfloat[Sum-rate vs $\Cf$]{
			\includegraphics[width=0.42\linewidth]{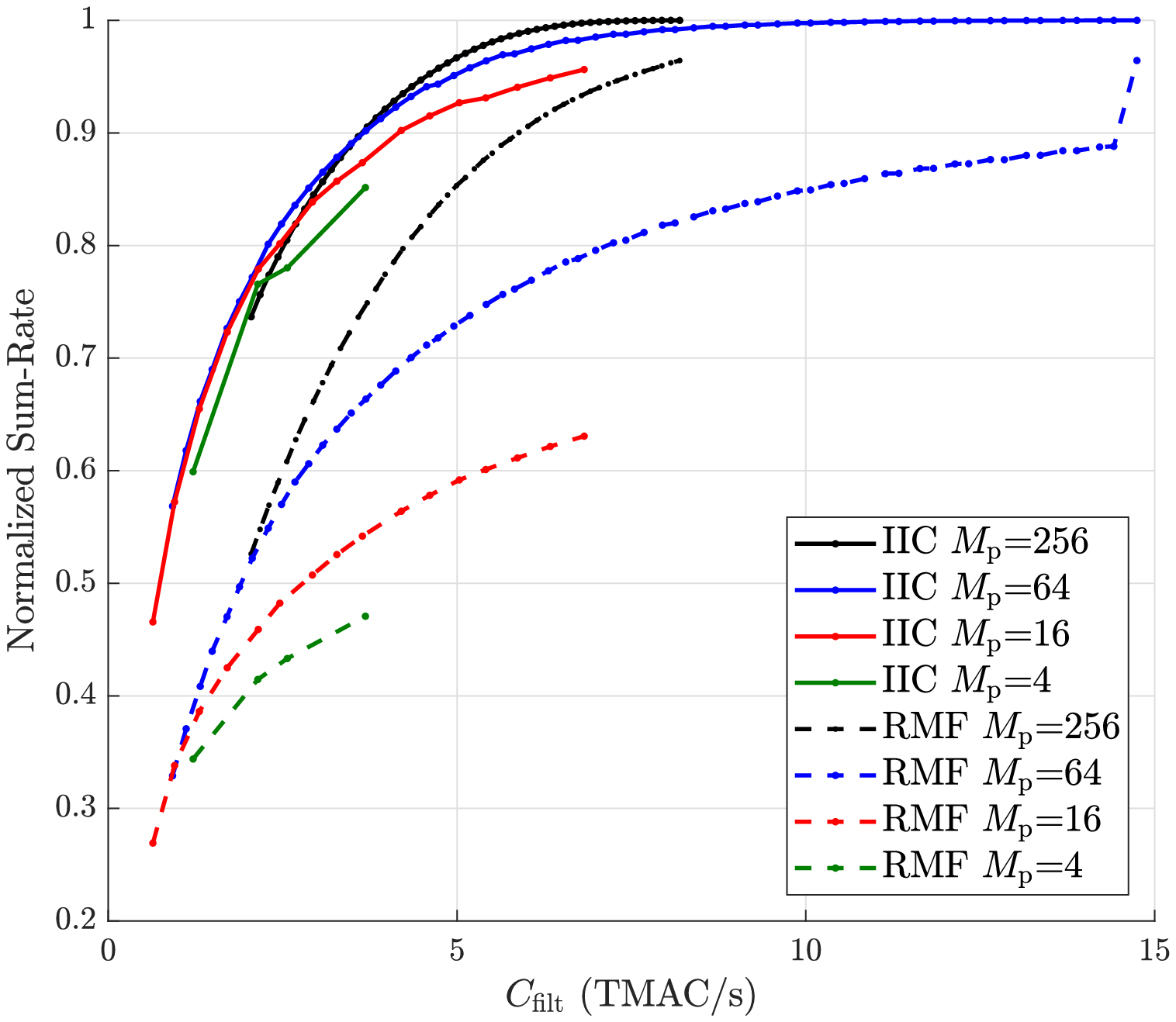}
			\label{fig:SR_vs_C}
		}
		\subfloat[Sum-rate vs $\Req$]{
			\includegraphics[width=0.42\linewidth]{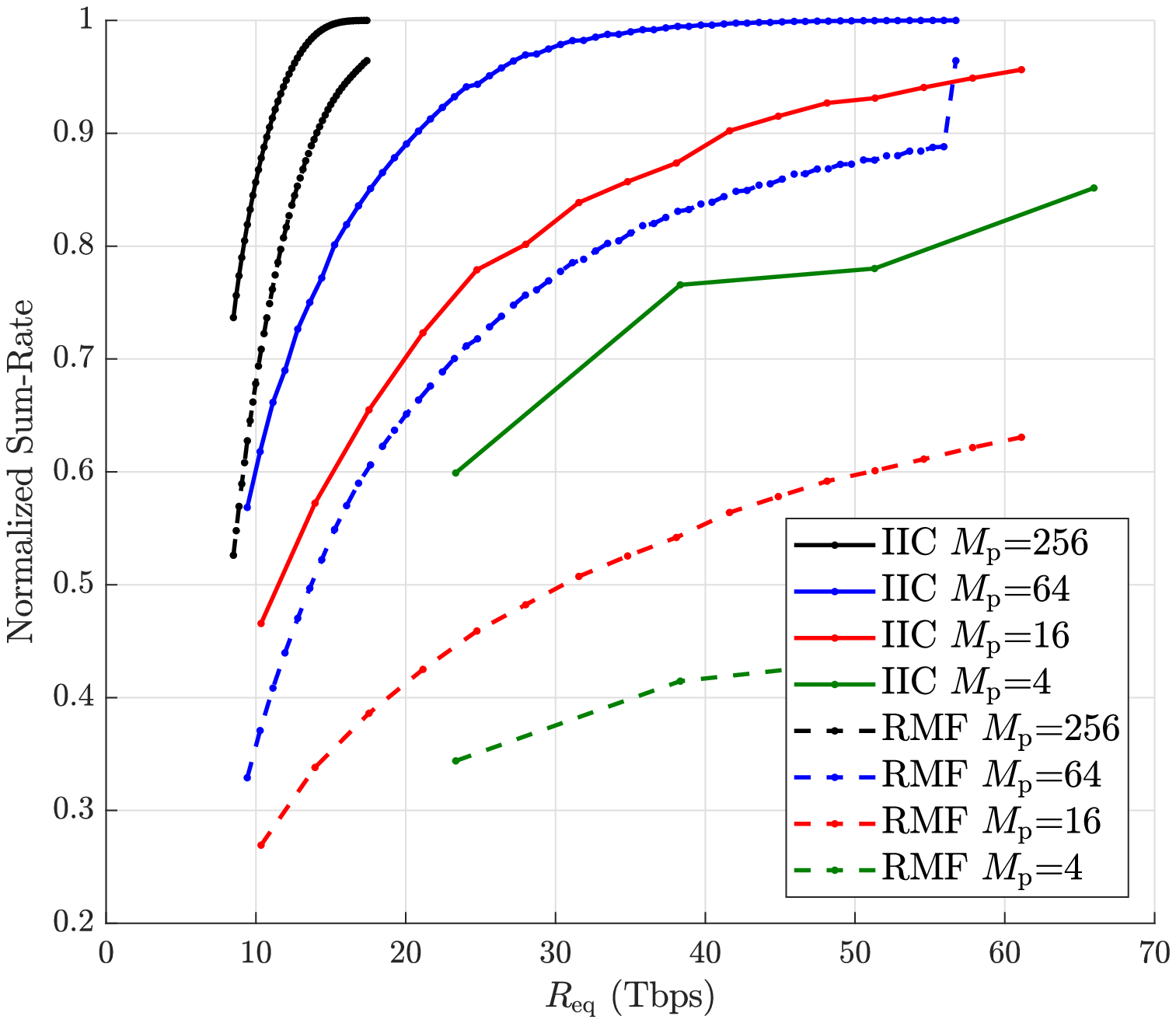}
			\label{fig:SR_vs_R}
		}
		\vspace*{-4mm}
		\\
		\subfloat[Sum-rate vs $\Cf$ for different $M$]{
			\includegraphics[width=0.42\linewidth]{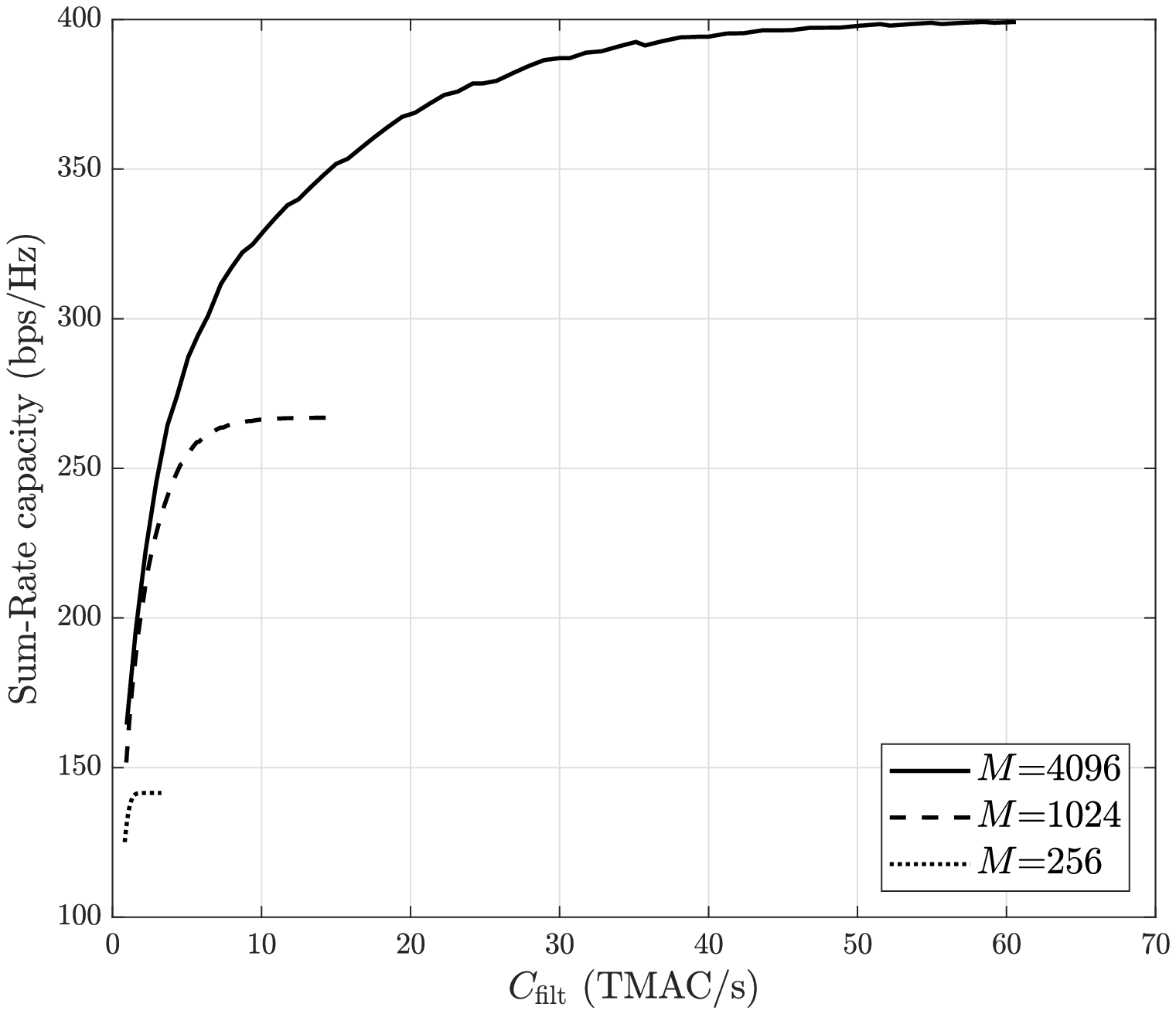}
			\label{fig:SR_vs_C_M}
		}
		\subfloat[Sum-rate vs $\Req$ for different $M$]{
			\includegraphics[width=0.42\linewidth]{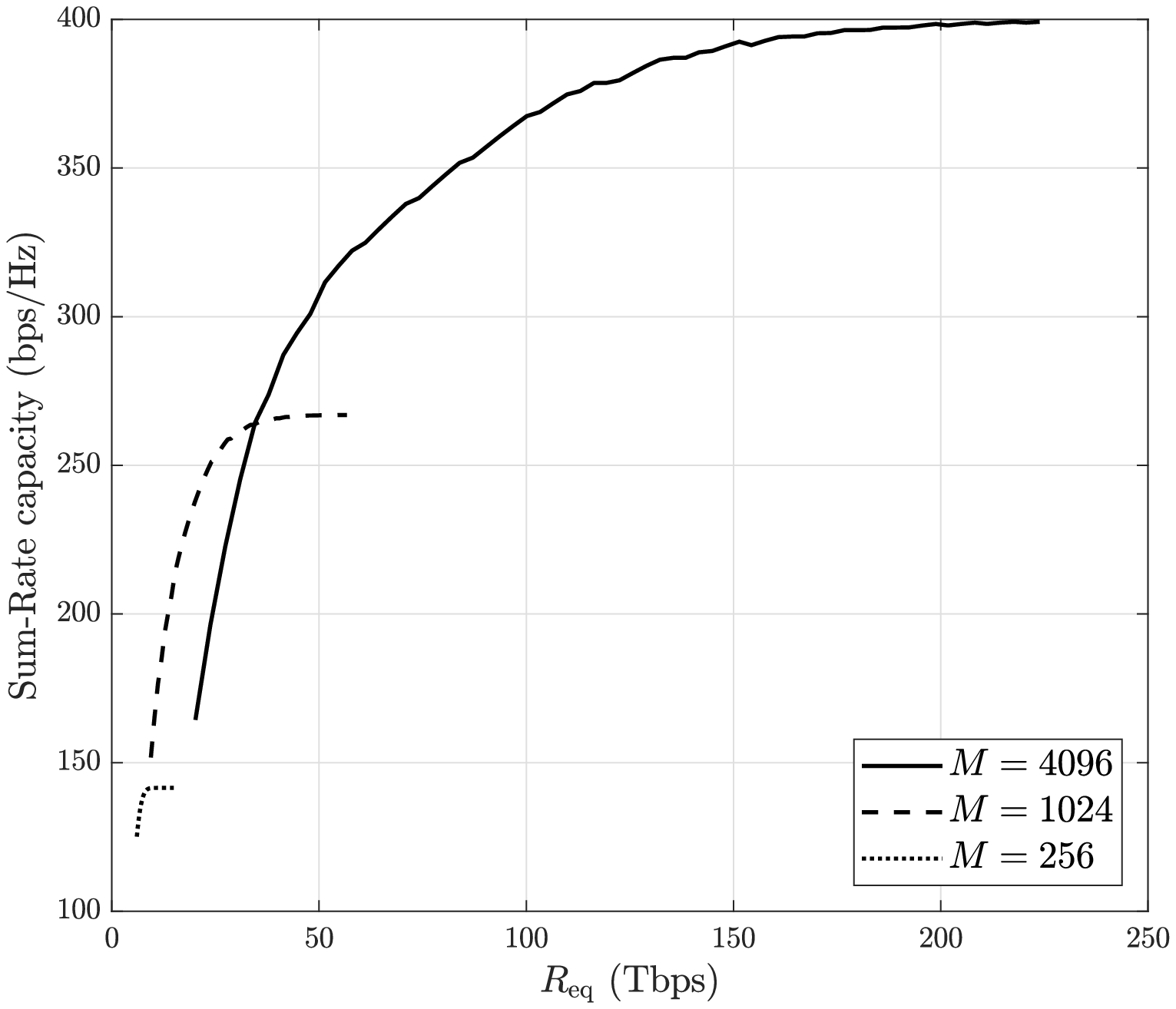}
			\label{fig:SR_vs_R_M}
		}
			
		\caption{Sum-rate capacity normalized by channel capacity at CDSP interface versus computational complexity (\ref{fig:SR_vs_C}) and interconnection data-rate (\ref{fig:SR_vs_R}). In all cases, results for different panel sizes are shown, together with both algorithms. For both cases: $\beta_{\mathrm{b1}}=\beta_{\mathrm{b2}}$. Simulated points represent different $\Np$ values. Sum-rate capacity versus computational complexity (\ref{fig:SR_vs_C_M}), and versus interconnection data-rate for different LIS size (\ref{fig:SR_vs_R_M}). $\Mp=64$, IIC method, and $\rho=10$. $K=64$ in all cases.}
		\label{fig:SR_vs_complexity}
		\vspace*{-4mm}
	\end{figure*}

	The scenario for simulation is shown in Fig. \ref{fig:LIS_scenario}. It consists of 64 users ($K=64$) uniformly distributed in a $10m \times 10m \times 3m$ (depth x width x height) volume in front of a $1.2m \times 1.2m$ (height x width) LIS. Signal bandwidth and carrier frequency are 100MHz and 4GHz respectively. We assume the OFDM-based 5G New Radio (NR) frame structure \cite{5G} and consider uplink processing only.

	To obtain meaningful statistical information we generate 100 channel realizations, by placing the users within the volume following an uniform distribution in the 3 dimensions. For each realization, sum-rate capacity is calculated at different interfaces \footnote{For interfaces we mean panels output, tree nodes outputs, and CDSP input.} of the system, and then averaged across all realizations. The first analysis consists of studying the relation between sum-rate and SNR, and the validity of the bounds in Proposition \ref{prop:ub}. Averaged $C_{\z}$ for $\Np=2$ and different SNR values is shown in Fig. \ref{fig:sum_rate_vs_SNR}, which has been divided in two SNR regions for visual clarity \footnote{The bounds are obtained for the sum-rate at the panels output interface, but are also valid for any other internal interface in the system (such as CDSP input), as sum-rate is lower of equal after each processing in the tree.}. Selection of $\Np=2$ allows us to have enough output panel dimensionality \footnote{$\Np > 2$ also meet this requirement, but at the expense of an increase interconnection bandwidth.}, specifically: $N=128 > K$. Averaged values of the bounds are also shown for comparison. It is clear as $\Ca$ is tight in the low SNR region, while both bounds follow the same slope ($K$) as the sum-rate for high SNR values, with $\sim 5dB$ offset in this case. $\Ca$ is better bound than $\Cb$ in this scenario.

	The sum-rate capacity at CDSP input interface depends on the individual selection of the reduction factor at each node in the system, which leads to a considerable number of possibilities. In order to simplify the analysis and show in a clear form how this individuals selection affects the system performance, let as consider a tree with 3 levels (as in Fig. \ref{fig:model_backplane}) where we constraint the reduction factors as follows: $\betabb=\betabc$, and $\betabc \betabb \betaba \betap = \frac{K}{\Mp}$, where  $\betabi=\frac{\Nbi}{4 N_{\text{bi-1}}}$, $\betap=\frac{\Np}{\Mp}$, and $\betaba=\frac{\Nba}{4\Np}$. By doing so, we ensure there is dimensionality $K$ at the CDSP input for every combination. Therefore, $\beta$ represents the dimensionality reduction at a certain level of the system (all nodes in a certain level are assumed to have same $\beta$ for simplicity\footnote{We foreseen a non-uniform $\beta$ case can be more adequate for scenarios with non-uniform user distribution, which allows to spend resources where it is needed. This is left for further analysis.}), and may take values from 0 (total reduction) to 1 (no reduction). Under this constraint, $\betap$ and $\betaba$ are free to be chosen. Each possible combination provides different sum-rate at CDSP interface, in exchange of different complexity cost \footnote{Computational complexity and interconnection bandwidth.}. Fig. \ref{fig:sum_rate_betas} shows the relation between these two parameters and the normalized sum-rate (value 1 refers to channel capacity measured at antenna interface, and consequently it is the same for both algorithms) for RMF and IIC. It is important to note as multiple ($\betap$, $\betaba$) working points on the same contour level provide the same performance. We verify as a lower reduction (higher $\beta$) leads to higher capacity (but higher interconnection bandwidth), reaching the maximum (or close to it) if no reduction is taking place in the first two levels (point (1,1) in the figure). It is evident as IIC allows higher reduction for same same performance compared to RMF, which translates in lower complexity during filtering, in exchange of higher formulation complexity (computational due to SVD dependency, and interconnection due to the panel-panel local exchange of data).

	\subsection{Computational Complexity}
	\label{sub:complexity}
	We consider number of complex multiplications (MAC) as a metric to measure computational complexity. Our analysis includes both phases: formulation and filtering. In the filtering phase, the operations are the same for RMF and ICC, which consists of applying a liner filter in the panels, of size $\Np \times \Mp$, to the $\Mp \times 1$ input vector, and similar for the BDSP nodes (with different sizes). The total computational complexity for filtering is given by (in MAC/s)		
	\begin{equation}
	\Cf = \underbrace{\fB P \Cf^{(0)}}_{\text{front-end}} + \underbrace{\fB \sum_{n=1}^{L} \Nspu^{(n)} \Cf^{(n)}}_{\text{backplane}},
	\label{eq:Cfilt}
	\end{equation}
	where $\fB$ is the signal bandwidth, $\Cf^{(0)} = \Mp \Np$ is the computational complexity per panel to filter one subcarrier, $\Cf^{(n)} = 4\Nb^{(n-1)} \Nb^{(n)}$ is the corresponding in a node at level $n$, $L$ is the number of levels in the tree, $\Nspu^{(n)}$ is the number of SPUs at level $n$, this is $\Nspu^{(n)} = \frac{P}{4^{n}}$, and $\Nb^{(0)} = \Np$ for notation commodity.
		
	The formulation phase of RMF includes the computation of $\|\h\|^{2}$ for each user. For the IIC algorithm, the steps required for the formulation phase are shown in Algorithm \ref{algo:IIC} for each panel \footnote{For backend there is no exchange of $\Z$ as explained in Section \ref{section:arch}, so the computational complexity is highly reduced.}. This algorithm relies on singular value decomposition (SVD), which we assume is based on 2 steps: Householder bidiagonalization and QR method by Givens rotations. Bidiagonalization is dominant in terms of complexity, so the total complexity of SVD of a $M \times N$ complex matrix can be approximated by $2 M^2N$.
	For step 1 of the Algorithm \ref{algo:IIC}, SVD of a $K \times K$ Gramian matrix $\Z_{i-1}$ is required, with complexity $2K^3$. Step 2 has a complexity of  $(\Mp+1)K^2$, step 3 combined together with 4 require a complexity of  $2\Np d_{0}^2$, where $d_{0}=\max\{K,\Mp\}$, $\Hbf_{eq}^{H}=\Wbfh \Hbf$ consists of $\Np \Mp K$ products, and step 5 of $\Np K^2$. The total computational complexity for IIC is given by (in MACs)\footnote{We assume one channel estimate per PRB, and therefore one filtering matrix calculation per PRB.}
	\begin{equation}
	C_{\text{form,IIC}} = \underbrace{\Nprb P \Cform^{(0)}}_{\text{front-end}} + \underbrace{\Nprb \sum_{n=1}^{L} \Nspu^{(n)} \Cform^{(n)}}_{\text{backplane}}
	\label{eq:Cform_IIC}
	\end{equation}
	where $\Cform^{(0)} = (2K + \Mp + \Np) K^2 + \Np \Mp K + 2\Np d_{0}^2$ is the computational complexity per panel during formulation, while $\Cform^{(n)} = 4\Nb^{(n)}\Nb^{(n-1)}K + 2\Nb^{(n)}d^{2}_{n}$ is the one per node at level $n$,  $d_{n}=\max\{K,4\Nb^{(n-1)}\}$. For RMF we have same expression as \eqref{eq:Cform_IIC} with $\Cform^{(0)}=\Mp K$, and $\Cform^{(n)}=4\Nb^{(n-1)} K$.
	Figures \ref{fig:SR_vs_C} shows normalized sum-rate capacity versus computational complexity during filtering for both algorithms and different panel sizes. We observe as IIC achieves better performance than RMF for same panel size, while large panels are key to harvest most of the capacity, with $\Mp\geq64$ reaching channel capacity in our simulations.
	Figure \ref{fig:SR_vs_C_M} shows sum-rate capacity versus computational complexity during filtering for different LIS size (IIC assumed), and same panel size ($\Mp = 16$). It is very interesting to observe as the same performance (for example 100) can be achieved by both $M=4096$ and $M=1024$, and with same computational complexity. However, their architecture may differ substantially, as the smaller LIS requires higher number of outputs per panel and lower dimensionality reduction, than the larger LIS, where aggressive reduction can be used. In summary, the small LIS ($M=1024$) is harvesting a significant fraction of the available channel capacity, while the larger LIS is only exploiting a very small fraction of it. This presents a very interesting design trade-off.
		
	\subsection{Interconnection bandwidth}
	In this section we analyze the inter-connection bandwidth during filtering phase, covering panel-node and node-node links. This bandwidth is given by (in bps)
	\begin{equation}
	R_{\text{inter}} = \underbrace{2w \fB P \Np}_{\text{front-end}} + \underbrace{2w \fB \sum_{n=1}^{L} \Nspu^{(n)} \Nb^{(n)}}_{\text{backplane}},
	\nonumber
	\end{equation}
	where $w$ is the bit-width of the SPU input/output (real and imaginary parts).	In our analysis we also consider the movement of data happening internally at panels/nodes level, which covers the data transfer between the inputs ports to the SPU, for processing, and from it to the output ports. We name this transfer data-rate as $\textit{intra-connection data-rate}$ or $R_{\text{intra}}$ \footnote{We are aware that $\Rintra$ does not include all internal data-rate in a real system, as this is highly dependent on the specific implementation, internal topology, and type of processing unit employed in the panel. However, the spirit of this work is to provide a general analysis and first order approximation of the complexity required, applicable to all possible implementations, instead of being attached to an specific hardware implementation, and provide exact analysis numbers.}, and
	\begin{equation}
	\begin{aligned}
	R_{\text{intra}} &= \underbrace{2w \fB P (\Mp + \Np)}_{\text{front-end}} \\
	&+\underbrace{2w \fB \sum_{n=1}^{L} \Nspu^{(n)} (4\Nb^{(n-1)}+\Nb^{(n)}) }_{\text{backplane}}.
	\end{aligned}
	\nonumber
	\end{equation}

	In order to take both magnitudes into consideration in our analysis, we define the relative cost $\alpha$, as $\alpha \triangleq \text{cost}(\Rintra)/\text{cost}(\Rinter)$, and the cost equivalent inter-connection data-rate $\Req$ as: $\Req \triangleq \Rinter + \alpha \Rintra$.
	In this analysis, we take power/data-rate as cost magnitude. If we assume serial link (serdes) technology for intra-connection, and Ethernet for inter-connection, then we obtain a power consumption of $1.29-24.8$mW/Gps, and $40$mW/Gbps respectively according to different sources \cite{serdes1,serdes2,eth1,eth2}. The serdes power range is very wide, so as an example we take 4mW/Gbps as reference, that gives $\alpha \sim \frac{1}{10}$ \footnote{These numbers are dependent on the technology used, however, the method still holds.}.
	
	Figure \ref{fig:SR_vs_R} shows normalized sum-rate capacity versus equivalent inter-connection bandwidth during filtering for both algorithms and different panel sizes. We observe as IIC achieves better performance than RMF for same panel size, and large panels are capable to harvest most of the channel capacity in our simulations. It is relevant to point out that small panels require more total interconnection data-rate than large panels, however this is more distributed among panels and nodes, reducing considerably the bottlenecks.
	Fig. \ref{fig:SR_vs_R_M} shows sum-rate capacity versus interconnection bandwidth during filtering for different LIS size (IIC assumed). Similar conclusions can be extracted compared to Fig. \ref{fig:SR_vs_C_M}.
	
	\subsection{Processing Latency}
	The processing latency represents the time between when the estimated channel of a subcarrier is available at panels and  when the data of that subcarrier is filtered and available at the CDSP input for detection. The latency can be expressed as $L_{\text{tot}} = L_{\text{form}} + L_{\text{filt}}$, where $L_{\text{form}}$ is the formulation latency, and $L_{\text{filt}}$ is the latency for data filtering. More specifically, $L_{\text{form}} = L^{\text{proc}}_{\text{form}} + (\nP - 1) L^{\text{com}}_{\text{local}} + (L+1) L^{\text{com}}_{\text{global}}$, where $L^{\text{proc}}_{\text{form}}$ is the time needed to calculate the filter coefficients, $L^{\text{com}}_{\text{local}}$ refers to panel-to-panel communication latency (only in IIC), and $L^{\text{com}}_{\text{global}}$ refers to panel-to-node, and node-to-node link communication latency. $\nP$ is the number of panels involved ($\nP = 1$ in RMF and $P$ in IIC for the worst case) \footnote{Depending on the users distribution we may not need to go through all panels ($\nP < P$) with the subsequent benefits. We leave both items for future work.}. For filtering latency we have: $L_{\text{filt}} = L^{\text{proc}}_{\text{filt}} + (L+1) L^{\text{com}}_{\text{global}}$, which accounts for filtering in panels and nodes, and communication latency. We assume the IIC formulation is done sequentially along all panels (worst case) using local connections, and then across nodes in the tree.
	
	The latency for processing highly depends on the hardware architecture used to implement the algorithms. Here we assume highly optimized accelerators (e.g., ASIC) are used that the available data parallelism ($\Nparal$) can be explored using $\Nproc$ processing units ($\Nproc<\Nparal$), i.e., the $\Nproc$ PEs will take $\Nparal/\Nproc$ clock cycles to iteratively process $\Nparal$ parallel operations. Moreover, the channel matrix (of the subcarrier that is being processed) is cached in register files (the latency for memory access is hidden). The main component of $L^{\text{proc}}_{\text{form}}$ is the time needed to perform SVD which is implemented by Householder bidiagonalization followed by QR method based on Givens rotations. The processing of each column and row can be done in parallel, while sequential processing is needed between columns and rows due to the data dependency.
	
	With these assumptions, the total processing latency in formulation phase is $L^{\text{proc}}_{\text{form}} = \frac{\widetilde{C}_{\text{form}} T_{\text{CLK}}}{\Nproc}$, where $\widetilde{C}_{\text{form}} = \nP \Cform^{(0)} + \sum_{n=1}^{L}\Cform^{(n)}$. The first term in $\widetilde{C}_{\text{form}}$ represents the serial processing in the front-end, and the second term represents the computational complexity of one branch of the tree. $\Cform^{(0)}$ and $\Cform^{(n)}$ are defined after \eqref{eq:Cform_IIC}. $T_{\text{CLK}}$ is the clock period, and we assume that one complex multiplication and accumulation (MAC) can be done within one clock cycle. In case of filtering, processing latency is given by $L^{\text{proc}}_{\text{filt}} = \frac{\widetilde{C}_{\text{filt}} T_{\text{CLK}}}{\Nproc}$,	where $\widetilde{C}_{\text{filt}} = \sum_{n=0}^{L}\Cf^{(n)}$ is the computational complexity corresponding to a path between a panel and the CDSP, and $\Cf^{(n)}$ is defined after \eqref{eq:Cfilt}.
	
	\begin{table}[t!]
		\centering
		\begin{tabular}{c|c|c|c|c|c|c|c} 
			\hline
			$\textbf{Method}$ & $\Cform$ & $\Cf$ &  $\Rinter$ & $\Rintra$ & $L_{\text{form}}$ & $L_{\text{filt}}$ \\
			\hline
			\hline
			IIC &  3.1 & 2.3 & 1.0 & 5.4 & 110.2 & 1.0\\
			\hline
			RMF & 0.02 & 2.3 & 1.0 & 5.4 & 1.2 & 1.0\\ 
			\hline
		\end{tabular}
		\caption{Values of total complexity for a LIS: $M=1024$, $\Mp=64$, $K=50$, $\betap=1/4,\betaba=1/2$. $w = 12$ bits. $\Nprb = 275$. $\fB=100$MHz. Units are as follows: $\Cform$ [GMAC], $\Cf$ [TMAC/s], $\Rinter$ [Tb/s], $\Rintra$ [Tb/s], $L$ [$\mu$s].}
		\label{table:Comp_total}
		\vspace*{-4mm}
	\end{table}

	\subsection{Case study and discussion}
	Performance has been analyzed, together with computational complexity, inter-connection data-rate, and processing latency. General expressions for these different magnitudes have been presented based on general system parameters, such as number of users, number of antennas, number of panels, and signal bandwidth among others; what makes it easy to particularize for concrete implementations. Nevertheless, based on the trade-off analysis shown in Fig. \ref{fig:SR_vs_C} and Fig. \ref{fig:SR_vs_R}, we can see $\Mp=64$ as an attractive option, as it provides higher capacity than $\Mp=16$ for same computational complexity and interconnection data-rate, while it is able to reach channel capacity in our analysis scenario. It is also of a reasonable size in case we want to distribute the LIS in a certain area. On top of that, its physical dimensions makes it easy to handle and mount ($30cm \times 30cm$ at 4GHz). For this panel size we present numerical values of the analyzed complexity in Table \ref{table:Comp_total}. The following parameter values are assumed: $\nP=P=16$, $T_{\text{CLK}} = 1ns$, $N_{\text{paral}} = 100$, $L^{\text{com}}_{\text{local}}=100ns$ (serdes technology assumed \cite{serdes1,serdes2}), and $L^{\text{com}}_{\text{global}}=300ns$ (ethernet assumed \cite{eth1,eth2,eth_ti}). Assuming 12 subcarriers per PRB, the subcarrier spacing in our example is: $\frac{\fB}{12\Nprb}=30$KHz, and the OFDM symbol duration is therefore $\approx 33 \mu s$.
	
	The benefits of the distributed architecture are evident in terms of interconnection data-rate reduction. If we look at the CDSP input interface, the reduction is easily obtained as: $\frac{M}{K} \sim 20$x. Of course, this is in exchange of a performance loss due to dimensionality reduction, but as we have explained before the system is fully configurable, offering a rich performance-complexity trade-off. It is important to consider that even tough computational complexity and inter-connection data rates numbers may seem large, they are distributed among all processing units in the LIS. This LIS contains 21 SPUs (panels + backplane nodes).	
	
	Regarding latency, $L_{\text{form,RMF}}$ and $L_{\text{filt}}$ values seem reasonable for NR frame structure. We observe as $L_{\text{form,IIC}}$ shows much higher value due to the higher computational complexity required in this method (in this example equivalent to 3 OFDM symbols). For a certain LIS system this latency is sensitive to the $\beta$ used in panels and nodes (which translates into complexity cost), and $K$ (system capacity). Therefore we can foresee a trade-off between these system parameters and how often filters are updated in panels and nodes. It is important to remark that we analyzed latency from a worst case point of view, where all panels in the LIS are serially connected and jointly contribute to formulation. In reality we do not think this is the best approach as this may only be helpful in cases with very high density of users with dominant interference over noise. We foresee groups of panels performing serial processing within, but parallel among groups, reducing considerably the formulation latency.
	
	We are aware that depending on the implementation latency may be different (selection of memory system, hardware, interconnection), and here we provide high level analysis assuming we use dedicated accelerators without any overhead.

	\section{Conclusions}
	\label{section:conclusions}
	In this article we have presented distributed uplink processing algorithms and the corresponding hardware architecture for efficient implementation of large intelligent surfaces (LIS). The proposed processing structure consists of local panel processing units to reduce incoming data dimensionality without losing much information, and hierarchical backplane network with distributed processing-combining units to support flexible and efficient data aggregation. We have systematically analyzed the system capacity and implementation cost with different design parameters, and provided design guidelines for the implementation of LIS.

	\appendix
	
	\subsection{Proof of Proposition \ref{prop:ub}}
	\label{proof:ub}
	\begin{proof}
	The sum-rate capacity with the multi-panel architecture is given by
	\begin{equation}
	C = \log_2 |\IK + \rho \A|,
	\end{equation}
	where $\A = \sum_{i=1}^{P} \Hbfh_{i} \Q_{i} \Q^{H}_{i} \Hbf_{i}$. For a certain channel realization, the maximum capacity is achieved if all eigenvalues of $\A$ are equal, this is: $\lambda_n = \overline{\lambda}, 1 \leq n \leq K$ \footnote{Note that we assume $rank(\A)=K$, and $P\Np \geq K$.}. In that case, the capacity would be: $\Ca = K \log_2 (1+\rho \overline{\lambda})$. Now, let as find the maximum value for $\overline{\lambda}$ as follows
	\begin{equation}
	\begin{aligned}
	\overline{\lambda} &= \max_{\{\Q_i\}} \frac{1}{K} \Tr \{A\}
	= \frac{1}{K} \sum_{i=1}^{P} \max_{\Q_i} \Tr \{\Hbfh_{i} \Q_{i} \Q^{H}_{i} \Hbf_{i}\}\\
	&= \frac{1}{K} \sum_{i=1}^{P} \sum_{n=1}^{\Np} \lambda_{n}^{(i)},\\
	\end{aligned}
	\end{equation}
	and then: $C \leq \Ca$, so the proposition is proven. \footnote{We remark this bound may not be attained in practice, as it needs a favorable set of $\{\Hbf_i\}$, or in other way, there may not be such $\{\Q_i\}$ that provides uniform eigenvalues in $\A$.}
	\end{proof}
		
	\subsection{Proof of solution to local optimization in Algorithm \ref{algo:IIC_i}}
	\label{proof:sol_IIC_i}
	\begin{proof}
	We drop the panel index for simplicity. The objective function to maximize is
	\begin{equation}
	\begin{aligned}
	|\rho \Hbf^{H} \Q \Q^{H} \Hbf + \mathbf{Z}|
	&= |\Z| |\IK + \rho \Z^{-1/2} \Hbfh \Q\Q^{H} \Hbf \Z^{-1/2} |\\
	&= |\Z| |\I_{\Np} + \rho \Q^{H} \Hbf \Z^{-1} \Hbfh \Q|\\
	&= |\Z| |\Q^{H}(\rho\Hbf \Z^{-1} \Hbfh + \I_{\Mp})\Q|.
	\end{aligned}
	\nonumber
	\end{equation}

	$|\Z|$ does not depend on $\Q$, therefore the solution to our problem is the same as the solution of the maximization of the second determinant, which consists of the ordered eigenvectors (in descent order of corresponding eigenvalue) of the matrix: $\Hbf \Z^{-1} \Hbfh$.
	\end{proof}

	\subsection{Proof of white filtered noise}
	\label{proof:white_noise}
	As an example, the filtered noise due to the first first four panels and the node connected to them is denoted as $n_{1}^{(1)}$ and obtained as: $n_{1}^{(1)} = \Wbhij{1}{1} \Wphi{1-4} \mathbf{n}_{1-4}$, where $\Wpi{1-4}$ is the combined filtering matrix of the first four panels and it is defined as $\Wpi{1-4} = \diag(\Wpi{1},\Wpi{2},\Wpi{3},\cdots, \Wpi{4})$, and $\mathbf{n}_{1-4}$ is the aggregated input noise vector corresponding to the first four panels and it is defined as $\mathbf{n}_{1-4} = [\mathbf{n}_{1}, \mathbf{n}_{2}, \mathbf{n}_{3}, \mathbf{n}_{4}]^{T}$. The covariance is given by
	\begin{equation}
	\begin{split}
	\E \left\lbrace n_{1}^{(1)} n_{1}^{(1)H} \right\rbrace &= \Wbhij{1}{1} \Wphi{1-4} \E \{\mathbf{n}_{1-4} \mathbf{n}_{1-4}^{H}\} \Wpi{1-4}\Wbij{1}{1}\\
	&= \Wbhij{1}{1} \Wphi{1-4} \I_{4\Mp} \Wpi{1-4} \Wbij{1}{1}\\
	&= \Wbhij{1}{1} \I_{4\Np} \Wbij{1}{1} = \I_{\Nb^{(1)}}
	\end{split}
	\nonumber
	\end{equation}

	\bibliographystyle{IEEEtran}
	\bibliography{IEEEabrv,LIS}
	
\end{document}